\newtheorem{fact}[theorem]{Fact}  %
\newenvironment{reftheorem}[2]{\begin{trivlist}
\item[\hskip \labelsep {\bfseries\scshape #1}\hskip \labelsep {\bfseries #2.}]\itshape}{\end{trivlist}\slshape}
\newcommand{\A}{\mathcal{A}}
\newcommand{\G}{\mathcal{G}}
\renewcommand{\H}{\mathcal{H}}
\newcommand{\M}{\mathcal{M}}
\newcommand{\eps}{\varepsilon}
\newcommand{\genTran}[2]{%
    {}\mathchoice%
    {\stackrel{#1}{#2}}
    {\mathop {\smash{#2}}\limits^{\vrule width 0pt height 0pt depth 4pt\smash{#1}}}
    {\stackrel{#1}{#2}}
    {\stackrel{#1}{#2}}
{}}
\newcommand{\tran}[1]{\genTran{#1}{\rightarrow}}
\newcommand{\btran}[1]{\genTran{#1}{\hookrightarrow}}
\newcommand{\ctran}[1]{\genTran{#1}{\mapsto}}
\newcommand{\run}{\mathit{Run}}
\newcommand{\Val}{\mathit{Val}}
\newcommand{\Prob}{\mathit{Prob}}
\renewcommand{\Pr}{\mathbb{P}}
\newcommand{\Prb}[2]{\Pr^{#1}_{#2}}  %
\newcommand{\PrA}[3]{\Prb{#1}{#2}\hspace{-0.20em}\left(#3\right)}
\newcommand{\ExSign}{\mathbb{E}}
\newcommand{\Ex}[3]{\ExSign^{#1}_{#2}\hspace{-0.17em}\left(#3\right)}
\newcommand{\Reach}[1]{\mathit{Reach}(#1)}
\newcommand{\Term}[1]{{\mathit{Term}{(#1)}}}
\newcommand{\LimInf}[2]{{\mathit{LimInf}(#1 #2)}}
\newcommand{\LimSup}[2]{{\mathit{LimSup}(#1 #2)}}
\newcommand{\CN}{{\LimInf{=}{{-}\infty}}}
\newcommand{\negCN}{{\LimInf{>}{{-}\infty}}}
\newcommand{\DI}{{\LimInf{=}{{+}\infty}}}
\newcommand{\negDI}{{\LimInf{<}{{+}\infty}}}
\newcommand{\minusDI}{{\LimSup{=}{{-}\infty}}}
\newcommand{\Mean}[1]{{\mathit{Mean}(#1 0)}}
\newcommand{\PM}{{\Mean{>}}}
\newcommand{\negPM}{{\Mean{\leq}}}
\newcommand{\Nn}{{\mathit{All}(\geq 0)}}
\newcommand{\Inf}[1]{{\mathit{Inf}(#1)}}
\newcommand{\pwR}[2]{#1[#2]} %
\newcommand{\pw}[3]{\pwR{#1}{#2}(#3)} %
\newcommand{\NP}{\textbf{NP}}
\newcommand{\coNP}{\textbf{coNP}}
\newcommand{\PTIME}{\textbf{P}}
\newcommand{\len}[1]{\mathit{len}(#1)}
\begin{document}

\title{One-Counter Stochastic Games}

\author{ T. Br\'{a}zdil\inst{1}\thanks{Supported by the Czech Science Foundation, grant No.~P202/10/1469.},
         V. Bro\v{z}ek\inst{2}\thanks{Supported by Newton International Fellowship from the Royal Society.},
         K. Etessami\inst{2}}

\institute{
Faculty of Informatics, Masaryk University \\
\email{xbrazdil@fi.muni.cz}
\and
School of Informatics, University of Edinburgh \\
\email{\{kousha,vbrozek\}@inf.ed.ac.uk}
}

\titlerunning{One-Counter Stochastic Games}   %

\maketitle

\begin{abstract}
We study the computational complexity of basic decision problems 
for {\em one-counter
simple stochastic games} (OC-SSGs), under various objectives.
OC-SSGs are 2-player turn-based stochastic games played on the
transition graph of classic one-counter automata.
We study primarily the {\em termination} objective, where
the goal of one player is to maximize
the probability of reaching counter value 0, while the other
player wishes to avoid this.
Partly motivated by the goal of understanding termination objectives, 
we also  study certain ``limit'' and ``long run
average'' reward objectives that are closely related to some  
well-studied objectives for stochastic games with rewards.
Examples of problems we address include: does player 1 have a
strategy to ensure that the counter eventually hits  0, i.e., {\em terminates},
 almost
surely, regardless of what player 2 does?  Or that the $\liminf$ (or $\limsup$)
counter value equals $\infty$ with a desired
probability?  Or that the long run average reward
is $>0$ with desired probability?
We 
show that the {\em qualitative termination problem}
for OC-SSGs
is in
$\NP{}\cap\coNP{}$, and is in P-time 
 for 1-player OC-SSGs,
or equivalently for {\em one-counter Markov Decision Processes} (OC-MDPs).
Moreover,
we show that {\em quantitative}
limit problems for OC-SSGs are in $\NP{}\cap\coNP{}$, and are
in \PTIME{}-time for 1-player OC-MDPs.
Both qualitative limit problems and qualitative termination problems 
for OC-SSGs  
are already at least as hard
as Condon's quantitative decision problem for finite-state SSGs.

\end{abstract}

\section{Introduction}
\label{sec-intro}
There is a rich literature on the computational complexity of analyzing
finite-state Markov decision processes
and stochastic games.
In recent years, there has also been some research done
on the complexity of basic analysis problems for 
classes of finitely-presented but infinite-state
stochastic models and games whose transition graphs arise from 
decidable infinite-state automata-theoretic 
models, including: context-free processes,
one-counter processes, and pushdown processes (see, e.g., \cite{EY05icalp}).
It turns out 
that such stochastic automata-theoretic 
models are intimately related to classic stochastic 
processes studied extensively in applied probability theory, such as 
(multi-type-)branching processes and (quasi-)birth-death processes (QBDs)
(see \cite{EY05icalp,EWY08,BBEKW10}).

In this paper we continue this line of work 
by studying \textbf{one-counter simple stochastic games (OC-SSGs)},
which are turn-based 2-player zero-sum 
stochastic games on transition graphs of classic 
one-counter automata.
In more detail, an OC-SSG has a finite set of control states,
which are partitioned into three types:
a set of {\em random} states, from where the next transition is chosen
according to a given probability distribution, and states 
belonging to one of two players: {\em Max} or {\em Min},
from where the respective player chooses the next transition.  
Transitions can change the state and can also change the value of 
the (unbounded) counter by at most $1$.
If there are no control states 
belonging to {\em Max} ({\em Min}, respectively),
then we call the resulting 1-player OC-SSG a {\em minimizing}
({\em maximizing}, respectively) {\em one-counter Markov decision process} 
(OC-MDP).

Fixing strategies for the two players yields a countable state Markov
chain and thus a probability space of infinite runs (trajectories).
We focus in this paper on {\em objectives} that
can be described by a (measurable) set of runs, such that 
player Max  wants to maximize,
and player Min wants to minimize,  
the probability of the objective.
The central objective studied in this paper is {\em termination}:
starting at a given control state and a given 
counter value $j > 0$, player Max (Min) wishes
to maximize (minimize) the probability of eventually hitting the counter
value $0$ (in any control state).   

Different objectives give rise to
different computational problems for OC-SSGs, aimed at
computing the value of the game, or optimal strategies, with respect
to that objective.  From general known facts about stochastic games 
(e.g., Martin's Blackwell determinacy theorem~\cite{M98}), 
it follows that the games we study are {\em determined},
meaning they have a {\em value}: we can associate with each such game a {\em value},
$\nu$, such that for every $\eps>0$, player Max has a strategy
that ensures the objective is satisfied with probability at least $\nu-\eps$
regardless of what player Min does,
and likewise player Min has a strategy
to ensure that the objective is satisfied with probability at most $\nu+\eps$.
In the case of termination objectives, the value may be {\em irrational}
even when the input data contains only rational probabilities, and this
is so
even in the purely stochastic setting without any players, i.e.,
with only {\em random} control states 
(see~\cite{EWY08}).

We can classify analysis problems for OC-SSGs into two kinds:  
{\em quantitative} analyses: 
``can the objective be achieved 
with probability at least/at most $p$'' for a given $p \in [0,1]$; or
{\em qualitative} analyses, which ask the same question but
restricted to $p \in \{0,1\}$.
We are often also interested in what kinds of strategies
(e.g., memoryless, etc.)
achieve these.

In a recent paper, \cite{BBEKW10}, we studied 
{\em one-player} OC-SSGs, i.e., OC-MDPs,  and obtained some
complexity results for them
under qualitative termination 
objectives and some quantitative limit objectives.  
The problems we studied included 
the qualitative termination problem (is the maximum probability
of termination $= 1$?) 
for {\em maximizing} OC-MDPs.
We showed that this problem is decidable in \PTIME{}-time.   
However, we left open the complexity of the same problem 
for {\em minimizing} OC-MDPs
(is the minimum probability of termination $< 1$?).
One of the main results of this paper is the following, which 
in particular resolves this open question:

\begin{theorem} {\bf (Qualitative termination)} \label{thm:NT-val}
Given a OC-SSG, $\G$, with the objective of termination,
and given an initial control state $s$ and initial counter
value $j > 0$,  
deciding whether the value of the game is equal to 1 
is in $\NP{}\cap\coNP{}$.
Furthermore, the same problem  is in \PTIME{}-time for 1-player OC-SSGs, i.e., for 
both maximizing and minimizing OC-MDPs.
\end{theorem}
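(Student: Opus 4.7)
I would split the theorem into its two parts and prove them in reverse order: first establish polynomial-time algorithms for qualitative termination on one-player OC-SSGs (OC-MDPs), then bootstrap these to place the two-player problem in $\NP{}\cap\coNP{}$.

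The maximizing OC-MDP case was already handled in \cite{BBEKW10}, so the genuinely new one-player task is the \emph{minimizing} case: decide whether player Min, starting from $(s,j)$ with $j>0$, can keep the counter strictly positive forever with positive probability. The plan is a structural analysis of the finite control graph. For each bottom strongly connected component $C$ of the subgraph induced by a Min-policy, one needs to decide whether $C$ admits a stationary choice of edges for which the controlled random walk on the counter is transient away from $0$, i.e.\ has nonnegative expected increment together with a positive probability of never returning to any given threshold. I would characterize such ``escaping'' BSCCs by a polynomial-size feasibility program on edge frequencies within the BSCC, and then propagate ``can reach an escaping BSCC while the counter stays positive'' upward through the finite state space by a standard attractor-style fixed point computation. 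Combining the two steps yields a polynomial-time procedure for the minimizing case.

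For the two-player result, the central lemma is that for qualitative termination in an OC-SSG there always exist \emph{counter-oblivious} (memoryless, depending only on the control state) optimal strategies for both players. Once this is granted, $\NP{}$-membership follows by guessing such a strategy $\sigma$ for Max, fixing it to obtain a minimizing OC-MDP $\G_\sigma$, and verifying in polynomial time (by the minimizing case above) that $\G_\sigma$ still has termination value $1$. The $\coNP{}$ procedure is entirely symmetric: guess a counter-oblivious strategy for Min, and invoke the maximizing-OC-MDP algorithm of \cite{BBEKW10} to check that the induced maximizing OC-MDP has value strictly less than $1$.

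The main obstacle will be proving the counter-oblivious strategy lemma. A priori, optimal play could depend on the current counter value, and the state space is infinite. My plan is to prove the lemma via a combination of Martin's Blackwell determinacy theorem with a martingale / pumping argument: the large-counter behaviour of an OC-SSG depends, up to vanishing error, only on the statistical drift induced in each control state, so any $\eps$-optimal history-dependent strategy can be collapsed on the large-counter region to a memoryless one without losing termination probability in the limit. The finitely many small-counter boundary configurations that remain can then be handled via a reduction to a finite-state SSG. This collapsing argument is the most technically delicate step, and is where I expect the bulk of the proof effort to lie.
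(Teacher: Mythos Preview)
Your central lemma---that both players have counter-oblivious memoryless optimal strategies for qualitative termination---is false for Min. The paper exhibits an explicit minimizing OC-MDP (appendix, Section~\ref{sec:NT-strat}) in which, for initial counter value $j=1$, every pure memoryless Min-strategy yields termination probability $1$, yet Min can force termination probability strictly below $1$ by tracking the accumulated reward with memory of size $|V|$. So your $\coNP$ direction does not go through as stated: there is nothing memoryless to guess on Min's side, and the ``collapsing'' argument you propose cannot succeed, because the statement it aims to prove is simply not true.

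The paper avoids this obstacle by not working with the termination objective directly. The key structural step is Lemma~\ref{lem:NT-MD}: for $j\geq |V|$, $\Val^{\Term{j}}(s)=1$ iff $\Val^{\CN}(s)=1$, where $\CN$ is the \emph{finite-state} reward objective ``$\liminf$ of accumulated reward equals $-\infty$''. For $\CN$ both players \emph{do} have pure memoryless optimal strategies (Proposition~\ref{prop:long-run-MD-det}), so one guesses those and verifies in the resulting one-player reward MDP via Theorem~\ref{thm:rewards}. Small counter values $j<|V|$ are then handled by a polynomial-size finite reachability SSG whose target set includes both the termination states and all control states with $\CN$-value $1$. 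The reduction to a limit objective on the finite control graph is precisely the idea your plan lacks.

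Your sketch for the minimizing OC-MDP case is also incomplete. An LP on edge frequencies within a BSCC captures the positive-drift case (the paper's set $W_\infty$, handled via the equivalence $\DI\equiv\PM$ and standard expected-mean-payoff algorithms), but it misses the zero-drift boundary case in which Min keeps the counter nonnegative \emph{surely} without the induced walk being transient. The paper isolates this as a separate objective $\Nn$ and shows it reduces to a nonstochastic two-player one-counter game solved by~\cite{VASS}, not to a feasibility LP; the overall $\negCN$ algorithm then stitches together reachability to $W_\infty\cup W_+$ with the optimal strategies on those sets.
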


\noindent Improving on this $\NP{}\cap\coNP{}$ upper bound for
the qualitative termination problem for OC-SSGs would require 
a breakthrough: 
we show that deciding whether the value 
of an OC-SSG termination game is equal to $1$
is already at least as hard as Condon's~\cite{C92} {\em quantitative} 
reachability problem for finite-state simple stochastic games 
(Corollary~\ref{cor:hard}).  We do not know a reduction in the
other direction.
We furthermore show that if the value is $1$ for a OC-SSG termination game, 
then Max has a simple kind of 
optimal strategy 
(memoryless, counter-oblivious, and pure)
that ensures termination
with probability $1$, regardless of Min's strategy. 
Similarly, if the value is less than $1$,
we show Min has a simple strategy (using finite memory, linearly bounded in the number
of control states) 
that ensures the probability of termination
is  $< 1-\delta$ for some positive $\delta>0$, regardless of what Max does. 
We show that such strategies for both players are 
computable in non-deterministic polynomial time
for OC-SSGs, and in deterministic P-time for (both maximizing
and minimizing) 1-player OC-MDPs.
We also observe that the analogous problem of deciding whether the value of a OC-SSG termination
game is $0$ is in \PTIME{}, which
follows easily by reduction to non-probabilistic games.

OC-SSGs can 
be viewed as stochastic game extensions
of Quasi-Birth-Death Processes (QBDs) (see~\cite{EWY08,BBEKW10}).   
QBDs are a heavily studied model in queuing theory
and performance evaluation (the counter keeps track of
the number of jobs in a queue).
It is very natural to consider controlled and game extensions
of such queuing theoretic models, thus allowing
for adversarial
modeling of queues with unknown (non-deterministic) environments
or with other unknown aspects modeled non-deterministically.
OC-SSGs with termination objectives also subsume ``solvency games'', a recently
studied class of MDPs motivated by modeling of a risk-averse investment scenario~\cite{BKSV08}. 

Due to the presence of an unbounded counter, an
OC-SSG, $\G$, formally describes a stochastic game
with a countably-infinite state space:  a ``configuration'' or ``state'' of the 
underlying stochastic game consists of a pair $(s,j)$, where $s$ is a control 
state of $\G$ and $j$ is the current counter value.
However, it is easy to see that we can equivalently view $\G$ as a
finite-state
\textbf{simple stochastic game (SSG)}, $\H$, with \textbf{rewards} as follows:
$\H$ is played on the finite-state transition graph
obtained from that of $\G$ by simply ignoring
the counter values.
Instead, every transition $t$ of $\H$ is assigned a \emph{reward}, 
$r(t) \in \{-1,0,1\}$,
corresponding to the effect that the transition $t$ would have on the counter
in $\G$.
Furthermore, when emulating an OC-SSG using
rewards,
we can easily place rewards on states 
rather than on 
transitions, by adding suitable auxiliary control states.
Thus, w.l.o.g., we can assume that OC-SSGs are presented
as equivalent finite-state SSGs with a reward, $r(s) \in \{-1,0,1\}$ 
labeling each state $s$.
A {\em run} of $\H$, $w$, is an infinite sequence of states
that is permitted by the transition structure, and 
we denote the $i$-th state along the run $w$ by $w(i)$.  
The termination objective for $\G$, when the initial counter value is $j > 0$, 
can now be rephrased as the
following equivalent objective for $\H$: 
\[
\Term{j} \coloneqq 
\{\, w \mid w \text{ is a run of } \H \text{ such that there exists }
m > 0 \text{ such that } \textstyle\sum\nolimits_{i=0}^m r(w(i)) = -j\: \}\ .
\]
An important step toward our 
proof of Theorem~\ref{thm:NT-val} and related results,
is to establish links between this termination objective and the following 
limit 
objectives,
which are of independent interest.
For  $z \in \{-\infty, \infty\}$, and a comparison operator
$\Delta \in \{ > , <,  =\}$, 
consider the following objective:
\[
\LimInf{\mathop{\Delta}}{z} \coloneqq
\{\, w \mid w\text{ is a run of } \H \text{ such that }
\liminf_{n\to\infty} \textstyle\sum\nolimits_{i=0}^n r(w(i)) \mathop{\Delta} z \:\}
\ .
\]
We will show that if $j$ is large enough (larger than the number of control 
states), then
the game value with respect to objective $\Term{j}$ and
the game value with respect to $\CN$ are either
both equal to $1$, or are both less than $1$ (Lemma~\ref{lem:NT-MD}).
We could also consider the ``$\sup$'' variant of these objectives, such as
$\minusDI$, but these are
redundant.  For example, by negating the sign of rewards,
$\minusDI$ is ``equivalent'' 
to $\DI$.
Indeed, the only limit objectives
we need to consider for SSGs are
$\CN$ and $\DI$,
because the others are either the same objectives
considered from the other player's points of view, or 
are vacuous, such as $\LimInf{>}{{+}\infty}$.
For both limit objectives, $\CN$ and $\DI$, 
we shall see that the value of the respective
SSGs is always rational (Proposition~\ref{prop:val-rational}).
We shall also show that
the objective $\DI$ is essentially equivalent to the following ``mean payoff'' 
objective (Lemma~\ref{lem:PM-DI-eq}):
\[
\PM \coloneqq
\{\, w \mid \text{w is a run of } \H \text{ such that }  
\liminf_{n\to\infty} \textstyle\sum\nolimits_{i=0}^{n-1} r(w(i))/n>0 \: \}\ .
\]
This ``intuitively obvious equivalence'' 
is not so easy to prove. 
(Note also that $\CN$ is certainly not equivalent
 to $\negPM$.)
We establish the equivalence by a combination of new methods
and by using
recent results by Gimbert, Horn and Zielonka~\cite{GH-SODA10,GZ09}.  
Mean payoff objectives are of course very heavily studied 
for stochastic games and for MDPs (see~\cite{Puterman94}).
However, there is a subtle but important difference here:
mean payoff objectives are  typically
formulated via {\em expected payoffs}: the Max player
wishes to maximize the {\em expected} mean payoff, while the
Min player wishes to minimize this.
Instead, in the above $\PM$ objective
we wish to maximize (minimize) the {\em probability} that
the mean payoff is $> 0$.   These require new algorithms.
Our main result about such limit objectives is the following:

\begin{theorem}\label{thm:rewards}
For both limit objectives, $O\in\{\CN,\DI\}$,
given a finite-state SSG, $\G$, with rewards,
and given a rational probability threshold, $p,\, 0\leq p\leq 1$,
deciding whether the value of $\G$
with objective $O$ 
is ${>} p$ (or ${\geq} p$) is in $\NP{}\cap\coNP{}$.
If $\G$ is a 1-player SSG (i.e., a maximizing or minimizing 
MDP), then the game value 
can be computed in P-time.
\end{theorem}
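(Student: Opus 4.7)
The plan is to show that for both objectives $O \in \{\CN, \DI\}$ each player admits a memoryless deterministic (MD) optimal strategy; the $\NP \cap \coNP$ bound then follows by the standard guess-and-verify schema. An NP machine guesses an MD Max-strategy $\sigma$, fixes it to obtain a minimizing MDP in which the one-player value can be computed in polynomial time (by the second half of the theorem), and checks that this value exceeds the threshold $p$. The coNP direction is symmetric, and determinacy of the game (Martin's theorem) forces the two sides to agree.

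For $\DI$, the plan is to invoke Lemma~\ref{lem:PM-DI-eq} and work with $\PM$ instead. The set of reward sequences satisfying $\PM$ is prefix-independent and submixing: if $u, v$ are sequences with strictly positive $\liminf$ Ces\`{a}ro averages $\alpha, \beta$, then any shuffle $w$ also has positive $\liminf$ Ces\`{a}ro average, because along any convergent subsequence of $S_n(w)/n$ one can extract a convex combination of subsequential averages of $u$ and $v$, each bounded below by $\alpha$ or $\beta$. Combined with concavity of the induced probabilistic payoff, the framework of Gimbert, Horn and Zielonka~\cite{GH-SODA10,GZ09} then delivers MD optimality for both players. For $\CN$, a direct reduction to $\PM$ fails, so the plan is an end-component analysis. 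In the Markov chain induced by any fixed pair of strategies, every BSCC $B$ satisfies $\Pr[\CN \mid \text{reach } B] \in \{0,1\}$: by the strong law of large numbers, if the mean reward $\mu(B) < 0$ then $S_n \to -\infty$; if $\mu(B) > 0$ then $S_n \to +\infty$; and in the borderline $\mu(B) = 0$ case the partial sum is a one-dimensional recurrent random walk, so $\liminf S_n = -\infty$ almost surely unless rewards in $B$ vanish identically. Consequently the value of a $\CN$-game equals the value of a reachability game whose targets are those end components in which the relevant player can almost surely force the play into a \emph{$\CN$-winning} BSCC, and the internal analysis of each end component reduces to a threshold mean-payoff game, solvable with MD strategies by classical results; composing MD attractor strategies with MD in-end-component strategies yields a globally MD optimum.

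For the 1-player case the plan is to make this structural analysis polynomial. First compute the maximal end components (MECs) in polynomial time. For each MEC decide in polynomial time whether it is winning for the objective: for $\DI$, test via linear programming whether the player can attain expected mean payoff ${>}\, 0$ inside the MEC; for $\CN$, test whether the player can attain expected mean payoff ${\leq}\, 0$ together with the presence of a reachable non-zero-reward cycle in the MEC. Finally, compute the maximum (or minimum) reachability probability to the set of winning MECs using the standard LP on the MEC quotient. I anticipate the main obstacle to be MD-sufficiency for $\CN$ in the game setting: the borderline $\mu = 0$ case obstructs a clean submixing argument directly on $\CN$, so the BSCC-and-end-component reduction must be carried out carefully and the local MD strategies inside end components must be glued to MD attractor strategies outside without losing any probability mass through end-component exits.
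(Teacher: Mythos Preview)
Your reduction of $\DI$ to $\PM$ via Lemma~\ref{lem:PM-DI-eq} is circular. That lemma only asserts $\PrA{\sigma}{s}{\PM} = \PrA{\sigma}{s}{\DI}$ for \emph{memoryless} $\sigma$; a history-dependent strategy can make the partial sums drift to $+\infty$ sub-linearly, placing the run in $\DI \setminus \PM$. Hence from MD-optimality for $\PM$ you obtain a memoryless $\sigma^*$ with $\PrA{\sigma^*}{s}{\DI} = \Val^{\PM}(s)$, but you still need $\Val^{\DI}(s) \le \Val^{\PM}(s)$, and that requires already knowing $\DI$ admits a memoryless optimum---the very thing you are proving. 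The paper's ``In particular'' clause in Lemma~\ref{lem:PM-DI-eq} explicitly relies on Proposition~\ref{prop:long-run-MD-det} for $\DI$, which is proved \emph{independently}: $\DI$ is not submixing, and the paper builds an MD optimum directly by converting the pure optimal strategy guaranteed by~\cite{GH-SODA10} first to finite memory via a stopping-time/random-walk argument, then eliminating the memory by BSCC analysis. Note also that your submixing justification for $\PM$ has the implication reversed: you argue $(u \in \PM \land v \in \PM) \Rightarrow w \in \PM$, whose contrapositive is submixing of $\negPM$ (yielding MD for Min, not Max); for Max the paper passes through the $\limsup$ variant, which \emph{is} submixing, together with $\liminf = \limsup$ a.s.\ under memoryless play.

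For $\CN$ your end-component sketch is in the right spirit for the one-player algorithms, but for MD optimality the paper's route is different and avoids the obstacle you flag. The objective $\CN$ itself is submixing, so Max-side MD optimality is immediate from~\cite{Gimbert-STACS07}. The hard direction is $\negCN$ (Min's side), which is not submixing; rather than a threshold mean-payoff analysis inside end components, the paper shows that from any state with $\Val^{\negCN}=1$ one can almost surely reach a state where either $\Val^{\DI}=1$ or the auxiliary objective $\Nn$ (all partial sums $\ge 0$) has value~$1$, the latter handled as a non-stochastic two-player game via~\cite{VASS}. Your acknowledged $\mu=0$ borderline is precisely what the $\Nn$ detour resolves, and it is not clear your ``threshold mean-payoff plus nonzero-reward cycle'' test handles it correctly in the two-player setting without essentially reinventing that argument.
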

\noindent
Although our upper bounds for both these objectives look the same,
their proofs are quite different.
We show that both players have pure and
memoryless optimal strategies
in these games (Proposition~\ref{prop:long-run-MD-det}),
which can be  computed in P-time for 1-player (Max or Min) MDPs.
Furthermore, we show that even deciding whether the value
of these games is either $1$ or $0$, given input for which 
one of these two is promised to be the case, is already at least as hard as
Condon's~\cite{C92} {\em quantitative} reachability
problem for finite-state simple stochastic games
(Proposition~\ref{prop:nb-hard}).
Thus, even any non-trivial {\em approximation} of the value of SSGs with such limit objectives is not easier than Condon's problem.

We already considered in \cite{BBEKW10} the
problem of maximizing the probability
of $\CN$ in a OC-MDP.  
There we showed that the maximum probability can be computed in P-time.
However, again, we did not resolve 
the complementary problem of minimizing the probability of
$\CN$ in a OC-MDP. 
Thus we could not address two-player OC-SSGs with either of these objectives,
and we left these as key open problems, which we resolve here.
An important distinction between {\em maximizing}
and {\em minimizing} 
the probability of objective $\CN$
is that maximizing this objective satisfies a {\em submixing}
property defined by Gimbert \cite{Gimbert-STACS07}, which he showed
implies the existence of optimal memoryless 
strategies, whereas minimizing the objective is not submixing, and
thus we require new methods to tackle it, which we develop in this paper.

Finally, we mention that
one can also consider OC-SSGs with the objective of
terminating in a {\em selected} 
subset of states, $F$. 
Such objectives were considered for OC-MDPs in \cite{BBEKW10}.
Using our termination results in this paper, we can 
also show
that given an OC-SSG it is decidable  (in double exponential time)
whether Max can achieve a termination probability 1
in a selected subset of states, $F$.
The computational complexity of selective termination is higher
than for non-selective termination: 
PSPACE-hardness holds already for OC-MDPs without Min
(\cite{BBEKW10}).
Due to space limitations, we omit results about selective termination 
from this
conference paper, and will
include them in the journal version of this paper.

\vspace{-1em}

\paragraph{Related work.}
As mentioned earlier, we initiated the study of some classes 
of 1-player OC-SSGs 
(i.e., OC-MDPs) in a recent paper \cite{BBEKW10}.
The reader will find extensive references to earlier
related literature in~\cite{BBEKW10}. 
No earlier work considered OC-SSGs explicitly,
but as we have highlighted already there are close connections 
between OC-SSGs and finite-state stochastic games with certain 
interesting limiting average reward objectives.
One-counter automata with a non-negative counter are equivalent 
to 
pushdown automata restricted to a 1-letter stack alphabet
(see \cite{EWY08}), and  
thus OC-SSGs with the termination objective form a subclass of
pushdown stochastic games, or equivalently, Recursive 
simple stochastic games (RSSGs).
These more general stochastic games were introduced 
and studied in~\cite{EY05icalp}, where it is shown that
many interesting computational 
problems for the general RSSG and RMDP models are undecidable,
including generalizations of qualitative termination problems
for RMDPs.
It was also established in
\cite{EY05icalp} that for  stochastic context-free games (1-exit RSSGs),
which correspond to pushdown stochastic games with only one state,
both qualitative and quantitative termination problems are decidable,
and in fact qualitative termination problems are decidable in $\NP{}\cap\coNP {}$
(\cite{EY06stacs}).
Solving termination objectives is a key ingredient for many more
general analyses and model checking problems for stochastic games.
OC-SSGs form another natural subclass of RSSGs,
which is incompatible with stochastic context-free games.
Specifically, for OC-SSGs with the termination objective,
the number of stack symbols, rather than the number of control states, 
of a pushdown stochastic game is being restricted to $1$.
As we show in this paper, this restriction again yields decidability
of the qualitative termination problem. 
However, the decidability of the quantitative termination problem 
for OC-SSGs
remains an  open problem (see below).

\vspace{-1em}

\paragraph{Open problems.}
Our results complete part of the picture  for decidability
and complexity of several problems for OC-SSGs.
However, our results also leave many open questions.
The most important open question for OC-SSGs is whether the {\em quantitative} 
termination problem, even for OC-MDPs, is decidable.
Specifically, we do not know 
whether the following is decidable:  given a OC-MDP, and a
rational probability $p \in (0,1)$,  
decide whether the maximum probability of
termination is ${>} p$  (or ${\geq} p$).
Substantial new obstacles arise for deciding
this.  In particular, we know that an optimal strategy
may in general need to use different actions at the same
control state for arbitrarily large counter values 
(so strategies cannot ignore the value of the counter, 
 even for arbitrarily large values), and this holds
already for the extremely simple case of solvency games
\cite[Theorem~3.7]{BKSV08}.

\vspace{-1em}

\paragraph{Outline of paper.}
We fix notation and key definitions in Section~\ref{sec-defs}.
In Section~\ref{sec:noboundary}, we prove Theorem~\ref{thm:rewards}.
Building on Section~\ref{sec:noboundary},
we prove Theorem~\ref{thm:NT-val} in Section~\ref{sec:boundary}.
Many proofs are in the appendix.

\section{Preliminaries}
\label{sec-defs}  

\begin{definition}
A {\bf simple stochastic game (SSG)}
is given by a finite, or countably infinite directed graph, $(V,\btran{})$,
where $V$ is the set of {\em vertices} (which we also call {\em states}),
and $\btran{}$ is the edge (also called {\em transition}) relation,
together with
a partition $(V_\top,V_\bot,V_P)$ of $V$,  
as well as 
a \emph{probability assignment}, $\Prob$, which
to each $v \in V_P$ assigns a rational probability distribution on
its set of outgoing edges.
States in $V_P$ are called \emph{random}, states in
$V_\top$ belong to player Max, and states in
$V_\bot$ belong to player Min.
We assume that
for every $v\in V$ there is at least one $u\in V$
such that $v\btran{}u$.
We often write $v\btran{x}u$ instead of $\Prob(v\btran{}u)=x$.
If $V_\bot=\emptyset$ we call $\G$
a \emph{maximizing} {\bf Markov decision process (MDP)}.
If $V_\top=\emptyset$ we call it a \emph{minimizing} MDP.
If $V_\bot=V_\top=\emptyset$ then we call $\G$ a
{\bf Markov chain}.
A SSG (or a MDP or a Markov chain) can be equipped with a reward
function, $r$,
which assigns to each state, $v\in V$,
a number $r(v)\in\{-1,0,1\}$.\footnote{Rewards 
can generally be arbitrary rational values, but for 
this paper we confine ourselves to rewards in $\{-1,0,1\}$.} 
Similarly, rewards can be assigned to transitions.
\end{definition}

For a {\em path},
$w=w(0) w(1) \cdots w(n-1)$, of states
in a graph,
we use $\len{w}=n$ to denote the length of $w$. 
A \emph{run} in a SSG, $\G$, is an infinite path in the underlying directed graph.
The set of all runs in $\G$ is denoted by $\run_{\G}$, and
the set of all runs starting with a finite path $w$ is $\run_{\G}(w)$.
These sets generate the standard Borel algebra on $\run_{\G}$.

A \emph{strategy} for player Max is a function,
$\sigma$, which to each \emph{history} $w \in  V^+$
ending in some $v\in V_\top$, assigns a
probability distribution on the set of outgoing transitions of~$v$.
We say that a strategy
$\sigma$ is \emph{memoryless} if $\sigma(w)$ depends only on the
last state, $v$, and \emph{pure} if $\sigma(w)$
assigns probability 1 to some transition, for each history $w$. 
When $\sigma$ is pure, we write $\sigma(w)=v'$ instead
of $\sigma(w)(v,v')=1$.
Strategies for player Min are defined similarly, just by substituting
$V_\top$ with $V_\bot$.

Assume we fix
a starting state $s$, and
a pair of strategies: $\sigma$ for player Max,
and $\pi$ for Min in a SSG, $\G$.
There is a unique probabilistic measure, $\Prb{\sigma,\pi}{s}$,
on the Borel space of runs $\run_\G$,
satisfying for all finite paths $w$ starting in $s$:
$\PrA{\sigma,\pi}{s}{\run_{\G}(w)} = \prod_{i{=}1}^{\len{w}-1} x_i$ where
$x_i,\ 1 \leq i < \len{w}$ are defined by requiring that
(a) if $w(i{-}1) \in V_P$ then $w(i{-}1)\btran{x_i}w(i)$; and
(b) if $w(i{-}1) \in V_\top$ then $\sigma(w(0)\cdots w(i{-}1))$ assigns $x_i$
to the transition $w(i{-}1)\btran{}w(i)$; and
(c) if $w(i{-}1) \in V_\bot$ then $\pi(w(0)\cdots w(i{-}1))$ assigns $x_i$
to the transition $w(i{-}1)\btran{}w(i)$.
In particular, $\PrA{\sigma,\pi}{s}{\run_{\G}(s)} = 1$.
In cases where $\G$ is a maximizing MDP, a minimizing MDP,
or a Markov chain, we denote this probability measure by
$\Prb{\sigma}{s}$,
$\Prb{\pi}{s}$, or
$\Prb{}{s}$, respectively.
See, e.g., \cite[p.~30]{Puterman94}, for the existence and uniqueness
of the measure $\Prb{\sigma}{s}$ in the case of MDPs. 
It is straightforward then to establish existence and uniqueness of 
$\Prb{\sigma,\pi}{s}$ for SSGs, by considering pairs of strategies to be
one strategy.

In this paper, an \emph{objective} for a stochastic game is
given by a measurable set of runs.
An objective, $O$, is called a \emph{tail} objective
if for all runs $w$ and all suffixes $w'$ of $w$, we have 
$w'\in O \iff w\in O$.

Assume we have fixed a SSG, an objective, $O$, and a starting state, $s$.
We define the \emph{value of $\G$ in $s$} as
\(
\Val^{O}(s)  \coloneqq  \sup_{\sigma} \inf_{\pi} \PrA{\sigma,\pi}{s}{O}
\).
It follows from Martin's Blackwell determinacy theorem \cite{M98} that
these games are {\em determined}, meaning
$\Val^{O}(s) =  \inf_{\pi} \sup_{\sigma} \PrA{\sigma,\pi}{s}{O}$.
A strategy $\sigma$ for Max is \emph{optimal in $s$}
if $\PrA{\sigma,\pi}{s}{O} \geq \Val^{O}(s)$ for every $\pi$.
Similarly a strategy $\pi$ for Min is \emph{optimal in $s$}
if $\PrA{\sigma,\pi}{s}{O} \leq \Val^{O}(s)$ for every $\sigma$.
A strategy is called {\em optimal} if it is optimal in every state.

An important objective for us is \emph{reachability}. Given a set
$T \subseteq V$, we define the objective
$\Reach{T} \coloneqq \{w \in \run_{\G} \mid \exists i \geq 0 : w(i) \in T \}$.
The following fact is well known:

\begin{fact}
\label{fact:reach}
(See, e.g., \cite{Puterman94,C92,CY98}.)
For both maximizing and minimizing finite-state MDPs with reachability objectives,
pure memoryless optimal strategies
exist and can be computed, together with the optimal value, in polynomial time.
\end{fact}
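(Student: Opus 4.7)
The plan is to handle the maximizing and minimizing cases separately, reducing each to a linear program of polynomial size and then extracting a pure memoryless optimal strategy from the optimal values.

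For a maximizing MDP with objective $\Reach{T}$, I would first use elementary graph reachability to compute the set $Z$ of states from which $T$ is unreachable in the underlying directed graph; for all $s \in Z$ clearly $\Val(s) = 0$. For $s \notin Z$, the value function is characterized by the Bellman equations: $\Val(s)=1$ if $s\in T$; $\Val(s) = \max_{s\btran{}t}\Val(t)$ if $s\in V_\top\setminus T$; and $\Val(s)=\sum_t \Prob(s\btran{}t)\cdot\Val(t)$ if $s\in V_P\setminus T$. The standard argument is that the value is the least fixed point of the corresponding Bellman operator on $[0,1]^V$, which can be captured by the linear program that minimizes $\sum_{s\notin Z} x_s$ subject to $x_s=1$ for $s\in T$, $x_s = 0$ for $s\in Z$, $x_s \geq x_t$ for every edge $s\btran{}t$ with $s\in V_\top\setminus (T\cup Z)$, and $x_s = \sum_t \Prob(s\btran{}t)\cdot x_t$ for $s\in V_P\setminus (T\cup Z)$. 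This LP has polynomial size in the input, so it can be solved in polynomial time, and its optimum gives $\Val(s)$ for every $s$. A pure memoryless optimal strategy for Max is obtained by picking, at each state $s\in V_\top$, any successor $t$ attaining $\Val(t)=\Val(s)$; one verifies by a standard martingale/one-step analysis that the resulting Markov chain indeed achieves the value.

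For a minimizing MDP, the analogous approach works but one must first identify the set $W$ of states from which Min has a strategy ensuring $\Reach{T}$ with probability $0$, i.e., Min can keep the play inside $V\setminus T$ almost surely. Standard attractor-style fixed point computation (compute the complement of the set of states from which Max-player-free nondeterministic reachability to $T$ exists through all Min choices) yields $W$ in polynomial time; on $W$ we set $\Val(s)=0$. Outside $W$, the Bellman equations now have $\Val(s)=\min_{s\btran{}t}\Val(t)$ for $s\in V_\bot\setminus (T\cup W)$, and the corresponding LP maximizes $\sum_{s\notin W} x_s$ under the analogous constraints with $\geq$ replaced by $\leq$ at Min states. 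Solving this LP gives the values, and a pure memoryless optimal strategy for Min is obtained greedily by choosing a successor attaining the minimum.

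The main obstacle in each case is justifying that the extracted pure memoryless strategy is actually optimal against all (possibly history-dependent, randomized) strategies of the opponent; here the opponent is trivial since we are in a MDP, but one still needs the argument that a greedy choice with respect to a fixed-point solution induces a Markov chain whose reachability probability matches the claimed value. I would do this via the classical martingale argument: under the greedy strategy, $\Val(w(n))$ is a bounded martingale on the induced Markov chain, so its limit exists almost surely and equals either $1$ (on runs reaching $T$) or $0$ (on runs trapped in $V\setminus T$ forever, which by the construction of $Z$ or $W$ carries the required probability); taking expectations and using the optional stopping theorem gives $\ExSign_s[\lim \Val(w(n))] = \Val(s)$, which matches $\Pr_s(\Reach{T})$ under the greedy strategy. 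Together with polynomial-time LP solvability, this yields the stated complexity bound.
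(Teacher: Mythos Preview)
The paper does not prove Fact~\ref{fact:reach}; it is stated with citations to standard references and used as a black box, so there is no argument in the paper to compare your sketch against.

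Your outline is the standard LP approach and the LP formulations are correct, but there is a genuine gap in the strategy extraction for the \emph{maximizing} case. Choosing at each $s\in V_\top$ an arbitrary successor $t$ with $\Val(t)=\Val(s)$ need not give an optimal strategy, and your martingale argument fails precisely at the step where you assert that $\lim_n \Val(w(n))=0$ on runs that never reach $T$ ``by the construction of~$Z$''. A one-state counterexample: let $s\in V_\top$ have a self-loop and an edge to a target state in $T$; then $\Val(s)=1$, both successors have value~$1$, and your greedy rule is free to pick the self-loop, yielding $\PrA{\sigma}{s}{\Reach{T}}=0$. More generally, the Markov chain induced by such a greedy~$\sigma$ may have a BSCC~$C$ disjoint from~$T$ on which $\Val$ is a positive constant; runs absorbed in~$C$ have $\lim_n\Val(w(n))>0$, and $C\not\subseteq Z$. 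The usual remedies are either to compute the almost-sure set $\{s:\Val(s)=1\}$ first by a purely graph-theoretic attractor iteration and break ties in the greedy choice toward it, or to extract the policy via the LP dual or policy iteration as in~\cite{Puterman94}. Your minimizing case, by contrast, does go through: any BSCC of the greedy-Min chain disjoint from~$T$ already witnesses a strategy with reach probability~$0$, which forces $\Val\equiv 0$ there and hence inclusion in~$W$, so the martingale limit really is~$0$ on non-reaching runs.
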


\section{Limit objectives}\label{sec:noboundary}

All MDPs and SSGs in this section have
finitely many states.
Rewards are assigned to states, not to transitions.
The main goal of this section is to prove Theorem~\ref{thm:rewards}.
We start by proving that both players have optimal
pure and memoryless strategies for objectives $\CN$, $\DI$, and $\PM$.
The following is a corollary of a result by Gimbert and Zielonka, 
which allows us to concentrate on MDPs instead of SSGs:

\begin{fact}
\label{fact:mdp-games}
(See~\cite[Theorem 2]{GZ09}.)
Fix any objective, $O$, and suppose that in every maximizing
and minimizing MDP with objective $O$, the unique player has a pure memoryless 
optimal strategy.
Then in all SSGs with objective $O$, 
both players have optimal pure and memoryless strategies.
\end{fact}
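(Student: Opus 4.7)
The plan is to follow Gimbert and Zielonka's transfer argument and induct on the quantity
$N(\G) = \sum_{v \in V_\top \cup V_\bot}(|\{u \mid v \btran{} u\}| - 1)$,
the total number of ``non-forced'' outgoing edges at player-controlled vertices. In the base case $N(\G)=0$ every player vertex has a unique successor, so the unique strategy profile is pure memoryless and trivially optimal, and the underlying system is just a Markov chain with objective $O$.

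For the inductive step I would select a player-controlled vertex $v$ of out-degree at least two; WLOG $v\in V_\top$ with outgoing edges $e_1,\dots,e_k$. Form the sub-games $\G_1,\dots,\G_k$, where in $\G_i$ the vertex $v$ retains only $e_i$. Each $\G_i$ has $N(\G_i)<N(\G)$, so by induction both players have pure memoryless optimal strategies in every $\G_i$. The core claim is then that there exists an index $i^*$ such that $\Val^O_{\G_{i^*}}(u)=\Val^O_{\G}(u)$ for every vertex $u$; granted this, the PMO optimal strategy for Max in $\G_{i^*}$, together with the forced choice $e_{i^*}$ at $v$, is PMO optimal in $\G$, and a symmetric argument at a Min-vertex handles Min.

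To prove the core claim I would invoke the MDP hypothesis: run the symmetric induction at Min-vertices first to obtain a PMO optimal strategy $\pi^*$ for Min in $\G$, then fix $\pi^*$ to form the residual maximizing MDP $\G^{\pi^*}$. By hypothesis Max has a PMO optimal strategy $\sigma^*$ in $\G^{\pi^*}$, and the edge $\sigma^*$ chooses at $v$ is the desired $i^*$: any Max deviation to some other $e_j$ in $\G$ can be answered by $\pi^*$ at no loss to Min, so the value at $v$ in $\G_{i^*}$ already matches the value in $\G$, and determinacy (Martin's theorem) propagates the equality to every vertex.

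Main obstacle: the apparent circularity in the previous step. Fixing $\pi^*$ and invoking the MDP hypothesis requires $\pi^*$ to be memoryless, yet obtaining a memoryless $\pi^*$ is part of what is to be proved. The clean resolution, which is really the substance of Gimbert--Zielonka, is to work first with finite-memory $\eps$-optimal strategies on both sides, and to extract from the MDP hypothesis abstract combinatorial properties of $O$ (monotonicity and selectivity of the underlying preference relation on infinite words) that let one build memoryless optimal strategies directly by manipulating histories, rather than by iteratively fixing opponent strategies. This side-steps the need to already know that the opponent is memoryless when forming the residual MDP, and is the only non-routine ingredient of the proof.
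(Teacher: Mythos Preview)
The paper does not give its own proof of this statement; it is recorded as a Fact with a citation to Gimbert and Zielonka and is used as a black box when establishing Proposition~\ref{prop:long-run-MD-det}. There is therefore nothing in the paper to compare your sketch against.

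Your outline does follow the shape of the Gimbert--Zielonka argument (induction on the total number of non-forced edges at player vertices, splitting into sub-games at a chosen vertex). Two comments on the ``main obstacle'' paragraph. First, the circularity is milder than you make it: since the induction is over \emph{all} player edges simultaneously, the induction hypothesis already provides pure memoryless optimal strategies for \emph{both} players in each sub-game $\G_i$, and it is one of those Min-strategies---not an as-yet-unknown optimal Min-strategy in the full game $\G$---that one fixes to obtain the residual Max-MDP and invoke the one-player hypothesis. Second, ``monotonicity'' and ``selectivity'' are the names Gimbert and Zielonka give to the combinatorial conditions on a preference relation that are \emph{equivalent} to the one-player hypothesis; depending on which presentation of their result one reads, the transfer step either routes through these conditions or proceeds by a direct sub-game argument, but in neither version are finite-memory $\eps$-optimal strategies needed as an intermediate stage.
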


\smallskip

Note that the probability of $\CN$ is minimized iff
the probability of $\negCN$ is maximized, similarly with
$\DI$ vs.\ $\negDI$, and $\PM$ vs.\ $\negPM$.

\smallskip
\begin{fact}
\label{fact:md-qual}
(See~\cite[Theorem 4.5]{GH-SODA10}.)
Let $O$ be a tail objective.
Assume that for every maximizing MDP and for every state, $s$, with $\Val^O(s)=1$,
there is an optimal pure memoryless strategy starting in $s$.
Then for all $s$ there is an optimal pure memoryless strategy 
starting in $s$,
without restricting $\Val^O(s)$.
\end{fact}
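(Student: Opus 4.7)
The plan is to reduce the quantitative statement to the qualitative hypothesis through the standard end-component decomposition of finite MDPs. Fix a finite maximizing MDP $M$ with tail objective $O$ and a starting state $s_0$ whose value is $v = \Val^O(s_0)$; I want to produce a pure memoryless strategy achieving value $v$ from $s_0$.

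First, I would invoke the maximal end-component (MEC) decomposition: under any strategy, almost every run eventually remains inside a single MEC, and each MEC $C$ can be regarded as a stand-alone sub-MDP. Since $O$ is a tail event, its probability conditional on a run being confined to $C$ from some point on is governed by a zero-one law inside the residual Markov chain obtained by fixing any recurrent memoryless strategy on $C$. I would call a MEC $C$ \emph{winning} if the value of $O$ in the sub-MDP on $C$ is $1$, and \emph{losing} otherwise. In a losing MEC, any run confined to $C$ achieves $O$ with probability $0$. In a winning MEC $C$, the hypothesis applies: at every $s \in C$ there is a pure memoryless strategy achieving $O$ almost surely, and by taking the union of their choices state-by-state I can exhibit a single pure memoryless strategy $\sigma_C$ that is optimal at every state of $C$.

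Second, let $W$ be the union of the states of all winning MECs. Reaching $W$ is a reachability objective on the finite MDP $M$, so by Fact~\ref{fact:reach} there is a pure memoryless strategy $\tau$ attaining the maximum probability of $\Reach{W}$ at every state. I would define the candidate strategy $\sigma$ by: on $s \in C$ for a winning MEC $C$, play $\sigma_C(s)$; on any other state, play $\tau(s)$. Since winning MECs are pairwise disjoint maximal end components, this is a well-defined pure memoryless strategy on all of $V$.

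Third, I would verify optimality at $s_0$. The lower bound is immediate: under $\sigma$ the probability of hitting $W$ equals $\tau$'s reachability value, and once inside a winning $C$ the run stays there and $\sigma_C$ realises $O$ almost surely. For the upper bound I must show that for any strategy $\sigma'$, $\PrA{\sigma'}{s_0}{O}$ is bounded by the $\sigma'$-probability of reaching $W$, which is itself at most $\tau$'s value. This uses the classical fact that under $\sigma'$, almost every run settles forever into some MEC, and, together with the zero-one classification from Step 1, runs settling into losing MECs contribute $0$ to $\Pr(O)$ by prefix-independence of $O$. The main obstacle is precisely this upper bound: justifying the zero-one dichotomy on losing MECs, i.e.\ showing that confinement to a losing MEC forces $\Pr(O) = 0$ conditional on confinement, regardless of the tail behaviour of $\sigma'$ inside $C$. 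The clean way to handle this is to apply the end-component theorem to the full tree of histories, reducing $\sigma'$ restricted to $C$ to a Markov chain whose tail $\sigma$-algebra is trivial, so $\Pr(O \mid \text{confined to } C) \in \{0,1\}$, and the value-$<\!1$ assumption for losing components forces this conditional to be $0$.
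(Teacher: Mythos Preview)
The paper does not prove this statement; it is quoted as a known result from \cite[Theorem~4.5]{GH-SODA10}, so there is no in-paper proof to compare against. Evaluating your argument on its own, the overall shape is right---reduce to reaching a qualitatively winning region and invoke the hypothesis there---but two steps do not go through as written.

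First, ``taking the union of their choices state-by-state'' is not a well-defined operation on pure memoryless strategies: if $\sigma_s$ and $\sigma_{s'}$ disagree at some third state $t\in C$, you have not said which action $\sigma_C$ picks at $t$, and the naive choice $\sigma_C(t)=\sigma_t(t)$ need not achieve probability~$1$ from every state of $C$. A correct uniformisation is possible (for instance: each set $W_i\coloneqq\{t:\PrA{\sigma_{s_i}}{t}{O}=1\}$ is closed under $\sigma_{s_i}$ because $O$ is a tail event, so one may use $\sigma_{s_1}$ on $W_1$, then $\sigma_{s_2}$ on $W_2\setminus W_1$, and so on), but this is real work that your sketch elides.

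Second, your justification of the zero--one dichotomy on losing end components is wrong. Fixing an arbitrary history-dependent $\sigma'$ does \emph{not} give a Markov chain with trivial tail $\sigma$-algebra: the induced chain lives on the tree of histories, is nowhere recurrent, and no Kolmogorov-type $0$--$1$ law applies to it. The conclusion you want---that the value in a losing MEC is in fact~$0$, not merely ${<}\,1$---is correct, but it is itself the substantive content of the Gimbert--Horn theory rather than an elementary consequence of end-component structure. The standard argument bypasses MECs altogether: for any $\sigma'$, the conditional probabilities $Y_n\coloneqq\PrA{\sigma'}{s_0}{O\mid w(0)\cdots w(n)}$ form a bounded martingale converging a.s.\ to $\mathbf{1}_O$ by L\'evy's $0$--$1$ law, and since $O$ is prefix-independent one has $Y_n\leq\Val^O(w(n))$. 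Hence almost every run satisfying $O$ visits states with $\Val^O=1$, giving $\PrA{\sigma'}{s_0}{O}\leq\PrA{\sigma'}{s_0}{\Reach{W}}$ for $W=\{s:\Val^O(s)=1\}$, and therefore $\Val^O(s_0)\leq\Val^{\Reach{W}}(s_0)$ directly.
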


\begin{proposition}\label{prop:long-run-MD-det}
For every SSG, considered with any of the objectives
$\CN$, $\DI$, or $\PM$,
both players Max and Min
have optimal pure memoryless strategies.
\end{proposition}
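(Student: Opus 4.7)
My plan is a two-stage reduction, first from SSGs to MDPs and then to the value-$1$ case, followed by an end-component analysis inside MDPs.

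By Fact~\ref{fact:mdp-games} it suffices to prove that in every (one-player) MDP, both maximizing and minimizing, under each objective $O\in\{\CN,\DI,\PM\}$, the sole player has a pure memoryless optimal strategy. Since minimizing the probability of $O$ is the same as maximizing the probability of its complement, I can reduce this to maximizing MDPs under any one of the six objectives $\CN,\DI,\PM,\negCN,\negDI,\negPM$. Each of these is a tail objective, because altering a finite prefix of a run shifts the partial sums $S_n=\sum_{i=0}^{n}r(w(i))$ by only a bounded additive constant, which affects none of $\liminf_n S_n=\pm\infty$, $\liminf_n S_n>-\infty$, or $\liminf_n S_n/n$. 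Fact~\ref{fact:md-qual} then reduces the remaining task to showing that, for each such objective, whenever a maximizing MDP has value~$1$ at a state $s$, some pure memoryless strategy from $s$ achieves the objective with probability~$1$.

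For the value-$1$ case I would use the standard maximal end-component (MEC) decomposition: almost every run of a finite MDP eventually remains inside a single MEC, so value~$1$ at $s$ is equivalent to the existence of a strategy that reaches, with probability~$1$, a MEC that is ``$O$-winning'' in the sense that some in-MEC strategy achieves $O$ almost surely from every one of its states. Pure memoryless reachability to any prescribed set of MECs is available via Fact~\ref{fact:reach}, so the essential remaining task is to exhibit, for each of the six objectives and each winning MEC, a pure memoryless in-MEC strategy winning almost surely. For this I would exploit that, inside a MEC, every pure memoryless strategy induces an ergodic finite Markov chain with a well-defined expected mean reward $\mu$, which by classical mean-payoff MDP results can be optimized by a pure memoryless choice. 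For $\CN,\DI,\PM$ I would pick a memoryless strategy whose $\mu$ is negative, positive, positive respectively (whenever such a choice exists in the MEC); the strong law of large numbers then yields $S_n\to-\infty$, $S_n\to+\infty$, or $\liminf_n S_n/n>0$ almost surely, which is exactly the winning condition. For $\negCN,\negDI,\negPM$ the partial sums should instead not drift to $-\infty$ (resp.\ not to $+\infty$, resp.\ have nonpositive liminf average); here I would pick a memoryless strategy minimizing $\mu$ and then analyze the induced random walk on the running counter.

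The main obstacle lies in the complemented objectives. For $\CN,\DI,\PM$ the authors' own remarks in the introduction about Gimbert's submixing framework essentially deliver the in-MEC conclusion for free, but they explicitly note that $\negCN$ (and analogously $\negDI,\negPM$) is not submixing, so that route is closed and a different argument is needed. The most delicate case is the borderline regime $\mu=0$ for these complement objectives: the law of large numbers alone does not decide whether $\liminf_n S_n$ is finite, and one must use finer recurrence/transience properties of the induced random walk on the counter values to conclude that a pure memoryless strategy already witnesses the desired almost-sure event. Carrying out this case analysis, matching the three regimes $\mu<0,\mu=0,\mu>0$ against each of the six objectives and verifying that the selected pure memoryless strategy wins almost surely in every case, is where I expect most of the technical work to go.
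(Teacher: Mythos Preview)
Your two-stage reduction via Facts~\ref{fact:mdp-games} and~\ref{fact:md-qual} to the value-$1$ case in maximizing MDPs is correct and is exactly what the paper does. The difficulty is entirely in that last step, and there your proposal has two concrete gaps.

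\textbf{You have the submixing cases backwards.} The paper shows (Appendix, Section~\ref{sec:sumbix}) that the submixing objectives among the six are $\CN$, $\negDI$, $\negPM$, and (via the $\limsup$ reformulation) $\PM$; the two that are \emph{not} submixing are $\DI$ and $\negCN$, and the paper even gives explicit counterexamples. So your sentence ``for $\CN,\DI,\PM$ the submixing framework delivers the in-MEC conclusion for free, but $\negCN$ (and analogously $\negDI,\negPM$) is not submixing'' is wrong in two places: $\DI$ is not free, and $\negDI,\negPM$ \emph{are} free. The genuinely hard cases are $\DI$ and $\negCN$, and neither is settled by choosing a mean-payoff-optimal memoryless strategy.

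\textbf{The mean-payoff argument does not close the hard cases.} For $\DI$, your plan ``pick a memoryless strategy with $\mu>0$'' presupposes that whenever a (possibly history-dependent) strategy achieves $\DI$ almost surely, the optimal \emph{expected} mean payoff in that end component is strictly positive. That implication is exactly the content of the paper's argument: starting from a pure optimal strategy (Gimbert--Horn), it constructs a \emph{finite-memory} strategy via a stopping-time/random-walk bound showing positive expected gain per reset, and then eliminates memory (Lemmas~\ref{lem:fin-opt}, \ref{lem:MC-classify}, \ref{lem:DI-MD}). Without that, you have no reason to believe $\mu>0$ is attainable. For $\negCN$, your prescription ``pick a memoryless strategy minimizing $\mu$'' goes in the wrong direction (you want to \emph{avoid} drift to $-\infty$), and even maximizing $\mu$ fails in the borderline $\mu=0$ case: a zero-drift irreducible chain with positive cycle variance still has $\liminf_n S_n=-\infty$ almost surely. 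The paper's fix is a different idea you do not have: it introduces the safety objective $\Nn=\{\forall n:\,S_n\ge 0\}$, observes that $\Val^{\Nn}=1$ is a \emph{non-stochastic} condition (hence solved memorylessly by~\cite{VASS}), and proves (Lemma~\ref{lem:CN-di-nn}) that $\Val^{\negCN}=1$ forces almost-sure reachability of $W_\infty\cup W_+$. Finally, your auxiliary claim that ``inside a MEC, every pure memoryless strategy induces an ergodic Markov chain'' is false: a memoryless choice can collapse the MEC to a proper subchain with transient states and several BSCCs, so the SLLN step needs more care even in the easy cases.
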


\begin{proof} {\em (Sketch.)}
Using Fact~\ref{fact:mdp-games} we consider only maximizing MDPs,
and prove the proposition for the objectives listed and
their complements.
Note that since all these objectives are tail, a play under an optimal strategy,
starting from a state with value $1$, cannot visit a state with value $<1$.
By Fact~\ref{fact:md-qual} we may thus safely assume that the value
is $1$ in all states.
We discuss different groups of objectives:

\smallskip

\noindent {$\CN$, $\negDI$, $\negPM$, $\PM$:}
The first three
(with $\CN$ also handled explicitly in~\cite{BBEKW10})
are {\em tail} objectives
and are also {\em submixing} (see \cite{Gimbert-STACS07}).
Therefore, Theorem~1 of~\cite{Gimbert-STACS07} immediately yields the 
desired result.
$\PM$ can be equivalently rephrased via a submixing $\limsup$ variant.
See Section~\ref{sec:sumbix} in the appendix for details.

\smallskip

\noindent {$\DI$:}
is a tail objective,
so there is always a pure optimal strategy, $\tau$,
by~\cite[Theorem 3.1]{GH-SODA10}.
Note that $\DI$ is {\em not submixing}, so 
Theorem~1 of \cite{Gimbert-STACS07} does not apply.
In the following we proceed in two steps:
we start with $\tau$ and convert it
to a finite-memory strategy%
\footnote{
A finite-memory strategy
is specified by a finite state automaton, $\A$, over the alphabet $V$. Given $w\in V^+$, the value
$\sigma(w)$ is determined by the state of $\A$ after reading $w$.},
$\sigma$.
Finally, we reduce the use of memory to get a memoryless strategy.

First, we obtain a finite-memory optimal strategy, 
starting in some state, $s$. For a run $w\in \run_{\G}(s)$ and $i\geq 0$,
we denote by $\pw{r}{i}{w}$ the accumulated reward
$\sum_{j=0}^i r(w(j))$ up to step $i$.
Observe that because $\tau$ is optimal
there is some $m>0$ and a (measurable) set of runs $A\subseteq \run_{\G}(s)$,
such that $\PrA\tau{s}{A}\geq \frac{1}{2}$, and for all $w\in A$ we have that
the accumulated reward along $w$ 
never reaches $-m$ (i.e.\ $\inf_{i\geq 0} \pw{r}{i}{w}>-m$).
Since for almost all runs of $A$ we have 
$\lim_{i\to\infty} \pw{r}{i}{w}=\infty$, there is some $n>0$ and a 
set $B\subseteq A$
such that $\PrA\tau{s}{B\mid A}\geq\frac{1}{2}$ (and hence, $\PrA\tau{s}{B}\geq \frac{1}{4}$), and
for all $w\in B$ we have that the accumulated reward along $w$ reaches $4m$ before the $n$-th step.
Thus with probability at least $\frac{1}{4}$, a run $w\in \run_{\G}(s)$
satisfies $\inf_{i\geq 0} \pw{r}{i}{w}>-m$ and
$\max_{0\leq i\leq n} \pw{r}{i}{w}\geq 4m$.

We denote by $T_s(w)$ the {\em stopping time} over
$\run_{\G}(s)$ which for every $w\in \run_{\G}(s)$ returns the
least number $i\geq 0$ such that either $\pw{r}{i}{w}\not\in (-m,4m)$, or
$i=n$. Observe that the expected accumulated reward at the stopping time
$T_s$ is at least $\frac{1}{4}\cdot 4m + \frac{3}{4} (-m)=\frac{m}{4}>0$.
Let us define a new strategy $\sigma$ as follows. Starting in a state
$s\in V$, the strategy $\sigma$ chooses the same transitions as $\tau$
started in $s$, up to the stopping time $T_s$.  Once the stopping time
is reached, say in a state $v$, the strategy $\sigma$ erases its memory
and behaves like $\tau$ started anew in $v$. Subsequently, $\sigma$
follows the behavior of $\tau$ up to the stopping time $T_v$. Once the
stopping time $T_v$ is reached, say in a state $u$, $\sigma$ erases
its memory and starts to behave as $\tau$ started anew in $u$, and so on.
Observe that the strategy $\sigma$ uses only finite memory because each
stopping time $T_s$ is bounded for every state $s$.
Because $\tau$ is pure, so is $\sigma$.

Now we argue that $\sigma$ is optimal.  Intuitively, this is because,
on average, the accumulated reward strictly increases between resets
of the memory of $\sigma$. To formally argue that this implies that the
accumulated reward increases indefinitely, we employ the theory of random
walks on $\mathbb{Z}$ and sums of i.i.d.\ random variables (see, e.g.,
Chapter~8 of~\cite{Chung01}). Essentially, we define a set of random walks,
one for each state $s$, capturing the sequence of changes to the accumulated
reward between each reset in $s$ and the next reset (in any state).
We can then apply random walk results, e.g., from \cite[Chapter~8]{Chung01}, to conclude that these walks
diverge to $\infty$ almost surely.
For details see Lemmas~\ref{lem:fin-opt} and~\ref{lem:MC-classify}
in the appendix.

Taking the product of the finite-memory strategy $\sigma$ and $\G$
yields a finite-state Markov chain.
By analyzing its bottom strongly connected components
we can eliminate the use of memory, and obtain a pure and memoryless optimal
strategy, see Lemma~\ref{lem:DI-MD} in the appendix.

\smallskip

\noindent {$\negCN$:}
Like $\DI$, the objective $\negCN$ is tail, but not submixing.
Thus there is always a pure optimal strategy, $\tau$, for $\negCN$,
by~\cite[Theorem 3.1]{GH-SODA10}, but
Theorem~1 of \cite{Gimbert-STACS07} does not apply.
We will prove Proposition~\ref{prop:long-run-MD-det} for
$\negCN$ using the results for $\DI$, and also a new objective,
\(
\Nn \coloneqq 
\{ w\in \run_{\G} \mid \forall n\geq 0:\textstyle\sum\nolimits_{j=0}^n r(w(j))\geq 0\}
\).
Let $W_\infty$ and $W_+$ denote the
sets of states $s$ such that $\Val^{\DI}(s)=1$,
and $\Val^\Nn(s)=1$, respectively.
Then, as we prove in the appendix, Lemma~\ref{lem:CN-di-nn}, for every state, $s$, with $\Val^\negCN(s)=1$:
\begin{equation}
\label{eq:reduce_reach}
\exists \sigma:
\PrA\sigma{s}{\Reach{W_\infty \cup W_+}}=1
\end{equation}
Moreover, we prove that whenever $\Val^\Nn(s) = 1$ then Max has
a pure and memoryless strategy $\sigma_+$ which is optimal in $s$ for $\Nn$.
Indeed, observe that player Max achieves $\Nn$ with probability $1$ iff {\em all} runs 
satisfy it. So we may consider
the MDP $\G$ as a 2-player non-stochastic game, where random nodes 
are now treated as player Min's. 
In this case, Theorem~12 of~\cite{VASS} guarantees the existence of
the promised strategy $\sigma_+$.
The proof is now finished by observing that, by Fact~\ref{fact:reach}, there is a
pure and memoryless strategy
$\sigma$ maximizing the probability of reaching $W_\infty\cup W_+$.
The resulting pure and memoryless strategy, optimal for $\negCN$, can be obtained by ``stitching''
$\sigma$ together with the respective optimal strategies for $\DI$ and $\Nn$.
\qed
\end{proof}

\noindent
The following simple lemma is proved in the appendix.
\newcommand\lemmabscctext{%
Let $\M$ be a finite, strongly connected (irreducible) Markov chain,
and $O$ be a tail objective. Then there is $x\in\{0,1\}$
such that $\PrA{}{s}{O}=x$ for all states $s$.%
}
\begin{lemma}\label{lem:01bscc}
\lemmabscctext{}
\end{lemma}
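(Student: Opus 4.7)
The plan is to prove the lemma in two stages: first I show that the map $f\colon s \mapsto \PrA{}{s}{O}$ is constant on the state space of $\M$, and then I show that this common value must lie in $\{0,1\}$. Both stages hinge on the interplay between the tail property of $O$ and the (strong) Markov property of $\M$.

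For the first stage, fix any two states $s, t$ and let $\tau_t$ be the first hitting time of $t$. Since $\M$ is finite and irreducible, $\tau_t$ is $\Prb{}{s}$-almost surely finite. The tail property says that for every run $w$ and every $k\geq 0$, the shifted run $\theta^k w := (w(k), w(k{+}1), \ldots)$ belongs to $O$ iff $w$ does, and this lifts from deterministic to random times by measurability. The strong Markov property applied at $\tau_t$, together with $X_{\tau_t}=t$, therefore gives
\[
\PrA{}{s}{O} \;=\; \PrA{}{s}{\theta^{\tau_t}(X_0 X_1 \cdots) \in O} \;=\; \PrA{}{t}{O}.
\]
Hence $f$ is constant; call its common value $c$.

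For the second stage, I invoke L\'evy's 0-1 law: with $\mathcal{F}_n$ the $\sigma$-algebra generated by $X_0, \ldots, X_n$, one has $\PrA{}{s}{O \mid \mathcal{F}_n} \to \mathds{1}_O$ $\Prb{}{s}$-almost surely. Applying the Markov property together with the tail property of $O$, the conditional probability $\PrA{}{s}{O \mid \mathcal{F}_n}$ collapses to $\PrA{}{X_n}{O} = f(X_n) = c$ almost surely. Passing to the limit yields $c = \mathds{1}_O$ almost surely, which forces $c \in \{0,1\}$.

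The main technical point is packaging the tail invariance of $O$ cleanly so that it commutes with random and conditional shifts of the run; once this is in place, both stages become standard invocations of the (strong) Markov property. A slicker alternative for the second stage would be to cite the well-known triviality of the tail $\sigma$-algebra of a finite irreducible Markov chain under each $\Prb{}{s}$, but the L\'evy-based route above makes the dependence on the specific tail event $O$ most transparent and keeps the argument self-contained.
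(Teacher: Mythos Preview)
Your proof is correct. The first stage---showing that $s\mapsto\PrA{}{s}{O}$ is constant via almost-sure hitting of $t$ and shift-invariance of $O$---matches the paper's argument essentially verbatim. The second stage differs: the paper obtains the $0$--$1$ dichotomy by invoking an external result (Theorem~3.2 of Gimbert--Horn~\cite{GH-SODA10}), which in the generality of SSGs guarantees that a tail objective with positive value somewhere has value~$1$ somewhere, then specializes this to a Markov chain. Your route via L\'evy's 0--1 law is more elementary and fully self-contained: it uses only the Markov property and classical martingale convergence, avoiding the stochastic-games machinery altogether. The paper's approach is terser but imports a comparatively heavy theorem for what is, as you observe, essentially the triviality of the tail $\sigma$-algebra of a finite irreducible chain; your argument makes that triviality explicit.
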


\noindent A corollary of the previous proposition and lemma is the following:
\begin{proposition}\label{prop:val-rational}
Let $O\in \{\CN, \DI, \PM\}$. Then
in every SSG,
and for all states, $s$,
$\Val^O(s)$ is rational,
with a polynomial length
binary encoding.
\end{proposition}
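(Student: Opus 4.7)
The plan is to invoke Proposition~\ref{prop:long-run-MD-det} to reduce the value computation to a reachability analysis in a finite Markov chain. By that proposition, for $O \in \{\CN, \DI, \PM\}$ both players admit pure and memoryless optimal strategies $\sigma^*$ and $\pi^*$. Fixing these in $\G$ produces a finite Markov chain $\M$ with at most $|V|$ states whose transition probabilities are a subset of those of $\G$, and are therefore rational with bit-length polynomial in the input size.

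Next, I would decompose $\M$ into its bottom strongly connected components (BSCCs). Each of the three objectives is a \emph{tail} objective: for $\CN$ and $\DI$ this is immediate since $\liminf_n \sum_{i=0}^n r(w(i))$ changes by only a bounded constant when a finite prefix is dropped, while for $\PM$ a standard calculation shows that the contribution of any finite prefix to the Ces\`aro average $\frac{1}{n}\sum_{i<n} r(w(i))$ vanishes as $n\to\infty$, so the $\liminf$ of the average is invariant under passing to a suffix. Lemma~\ref{lem:01bscc} then yields, for every BSCC $B$ of $\M$, a constant $x_B \in \{0,1\}$ such that the probability of $O$ equals $x_B$ from every state of $B$. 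Since almost every run of $\M$ eventually enters and remains in a unique BSCC,
\[
\Val^O(s) \;=\; \PrA{\sigma^*,\pi^*}{s}{O} \;=\; \sum_{B:\, x_B = 1} \PrA{}{s}{\Reach{B}}.
\]

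Finally, every reachability probability $\PrA{}{s}{\Reach{B}}$ in a finite Markov chain is the unique solution of a standard linear system whose coefficients are the transition probabilities of $\M$; these have bit-length polynomial in the input, so Cramer's rule (equivalently, Gaussian elimination with polynomially bounded fractions) delivers a rational solution whose numerator and denominator have polynomial bit-length. Summing polynomially many such rationals preserves polynomial bit-length, giving the desired encoding bound.

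I expect no substantial obstacle. The only mildly delicate point is verifying that $\PM$ is a tail objective, which reduces to the elementary observation above about Ces\`aro averages; the rest is a routine composition of Proposition~\ref{prop:long-run-MD-det}, BSCC decomposition, Lemma~\ref{lem:01bscc}, and polynomial-size linear algebra.
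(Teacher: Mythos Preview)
Your proposal is correct and follows essentially the same route as the paper: fix the pure memoryless optimal strategies from Proposition~\ref{prop:long-run-MD-det}, apply Lemma~\ref{lem:01bscc} to classify each BSCC of the resulting Markov chain as a $0$- or $1$-set for the (tail) objective $O$, and reduce $\Val^O(s)$ to the reachability probability of the union of the ``good'' BSCCs, which is rational with polynomial bit-size by standard linear-algebraic arguments (the paper cites~\cite{CY98} for this last step, where you spell out Cramer's rule).
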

\begin{proof}
By Proposition~\ref{prop:long-run-MD-det},
there are memoryless optimal strategies: $\sigma$
for Max, and $\pi$ for Min. Fixing them induces a Markov
chain on the states of $\G$. By Lemma~\ref{lem:01bscc},
in every fixed bottom strongly connected component (BSCC), $C$,
of this finite-state Markov chain,
all states $v \in C$ have the same value, $x_C$, which is either $0$ or $1$.
Denote by $W$ the union of all BSCCs, $C$, with $x_C=1$.
By optimality of $\sigma$ and $\pi$,
$\Val^{O}(s) = \PrA{\sigma,\pi}{s}{\Reach{W}}$ for
every $s\in V$. By, e.g., \cite[Section~3]{CY98}, this probability
is rational, with polynomial length bit encoding, since
reaching $W$ is a regular event, and every Markov chain
is a special case of a MDP.
\qed
\end{proof}

\vspace{-1em}

\paragraph{Proof of Theorem~\ref{thm:rewards}.}
We will need a couple of preliminary lemmas:

\begin{lemma}\label{lem:PM-DI-eq}
Let $\G$ be a MDP with rewards, and $s$ a state of $\G$.
Then for every memoryless strategy $\sigma$:
\[
\PrA\sigma{s}{\PM}=
\PrA\sigma{s}{\DI}
\]
In particular, both objectives are equivalent with respect to 
both the value
and optimal strategies.
\end{lemma}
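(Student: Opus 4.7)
The plan is to exploit that a memoryless $\sigma$ collapses the MDP into a finite Markov chain $\M_\sigma$ on $V$, and then to reason state-by-BSCC using the ergodic theorem. For the easy direction, I would note that $\PM\subseteq\DI$ even pointwise: if $\liminf_{n}\bigl(\sum_{i<n} r(w(i))\bigr)/n>0$, then $\pw{r}{n}{w}$ eventually exceeds any bound, so $\liminf_n \pw{r}{n}{w}=+\infty$. Hence $\PrA\sigma{s}{\PM}\le\PrA\sigma{s}{\DI}$ with no work.

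For the reverse inequality, almost every run of $\M_\sigma$ eventually enters and stays in some bottom strongly connected component (BSCC) $C$, so it suffices to show that, conditional on the run remaining in a fixed $C$, the events $\PM$ and $\DI$ coincide almost surely. Let $\pi_C$ be the stationary distribution on $C$ and set $\mu_C \coloneqq \sum_{v\in C}\pi_C(v)\,r(v)$. The ergodic theorem for finite Markov chains gives $\pw{r}{n}{w}/n\to\mu_C$ almost surely. If $\mu_C>0$ then $\pw{r}{n}{w}\to+\infty$ a.s., so both objectives hold; if $\mu_C<0$ both fail.

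The only delicate case is $\mu_C=0$, where $\PM$ plainly fails and I need to show $\DI$ also fails a.s. Fix a reference state $v_0\in C$ and decompose a run in $C$ into the excursions between successive visits to $v_0$. By the strong Markov property the net rewards $X_1,X_2,\dots$ accumulated on consecutive excursions are i.i.d., with $\mathbb{E}[|X_i|]<\infty$ (positive recurrence on the finite BSCC $C$, plus rewards bounded by $1$), and the renewal--reward identity gives $\mathbb{E}[X_i]=\mu_C\cdot\mathbb{E}[\tau_{v_0}]=0$. If $X_i=0$ almost surely, then at every return time $N_k$ to $v_0$ the accumulated reward is the same constant, so $\liminf_n \pw{r}{n}{w}<\infty$ and $\DI$ fails. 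Otherwise, the Chung--Fuchs recurrence theorem for non-degenerate mean-zero i.i.d.\ sums on $\mathbb{Z}$ (see, e.g., \cite[Chapter~8]{Chung01}) gives $\liminf_k \sum_{i=1}^k X_i=-\infty$ a.s.; since $\pw{r}{N_k}{w}$ equals this partial sum up to an additive constant, $\liminf_n \pw{r}{n}{w}=-\infty$ and $\DI$ again fails.

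The main obstacle is precisely this $\mu_C=0$ subcase: it is the only place where $\DI$ and $\PM$ could a priori differ, and where the ergodic theorem alone does not suffice -- one genuinely needs the recurrence of mean-zero random walks rather than just convergence of time averages. The ``in particular'' clause is then immediate: by Proposition~\ref{prop:long-run-MD-det} each of $\PM$ and $\DI$ admits pure memoryless optimal strategies, so equality of $\PrA\sigma{s}{\PM}$ and $\PrA\sigma{s}{\DI}$ for every memoryless $\sigma$ forces equality of the values $\Val^{\PM}(s)=\Val^{\DI}(s)$ and coincidence of the sets of pure memoryless optimal strategies for the two objectives.
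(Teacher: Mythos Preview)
Your argument is correct and follows essentially the same approach as the paper: pass to the finite Markov chain induced by the memoryless $\sigma$, decompose into BSCCs, and on each BSCC use the i.i.d.\ excursion rewards between returns to a reference state together with random-walk results from \cite[Chapter~8]{Chung01} to show that $\DI$ and $\PM$ have the same $\{0,1\}$ probability. Your trichotomy on $\mu_C$ via the ergodic theorem is slightly more direct than the paper's route through Lemma~\ref{lem:MC-classify} and the ladder-variable computation of Lemma~\ref{lem:PM-DI-scc}, but the substantive step---ruling out $\DI$ when the excursion mean is nonpositive---is the same in both.
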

\begin{proof} {\em (Sketch.)}
The inequality $\leq$ is true for all strategies, since $\PM \subseteq \DI$.
In the other direction, the property that $\sigma$ is memoryless is needed,
so that fixing $\sigma$ yields a Markov chain on the states of $\G$.
In this Markov chain, by Lemma~\ref{lem:01bscc}, for every BSCC, $C$,
there are $x_C\leq y_C\in\{0,1\}$, such that
\(
\PrA\sigma{s}{\PM \mid \Reach{C}}=x_C
\), and
\(
\PrA\sigma{s}{\DI \mid \Reach{C}}=y_C
\).
By random walk arguments, considering the rewards accumulated between subsequent
visits to a fixed state in $C$,
we can prove that $y_C=1 \implies x_C=1$,
see Lemma~\ref{lem:PM-DI-scc} in the appendix.
Proposition~\ref{prop:long-run-MD-det} finishes the proof.
\qed
\end{proof}

\begin{lemma}\label{lem:reach-mdp-qual}
For an objective  $O\in \{\CN, \negCN, \DI, \negDI\}$,
and a maximizing MDP, $\G$,
denote by $W$ the set of all $s\in V$ satisfying $\Val^O(s)=1$.
Then
$\Val^O(s)=\Val^{\Reach{W}}(s)$ for every state $s$.
\end{lemma}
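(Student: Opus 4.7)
{\em (Plan.)}
Each of $\CN$, $\negCN$, $\DI$, $\negDI$ is a tail objective, and by
Proposition~\ref{prop:long-run-MD-det} Max has a pure memoryless strategy
which is optimal in every state of $\G$. The plan is to prove the two
inequalities $\Val^{O}(s) \geq \Val^{\Reach{W}}(s)$ and
$\Val^{O}(s) \leq \Val^{\Reach{W}}(s)$ separately, by a strategy-stitching
argument for the first and a BSCC analysis for the second.

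For $\Val^{O}(s) \geq \Val^{\Reach{W}}(s)$, I would stitch together two
simple strategies: Fact~\ref{fact:reach} supplies a pure memoryless
strategy $\tau$ maximizing the probability of $\Reach{W}$, and for each
$w\in W$ Proposition~\ref{prop:long-run-MD-det} supplies a pure memoryless
strategy $\sigma_w$ with $\PrA{\sigma_w}{w}{O}=1$ (since $\Val^{O}(w)=1$).
The combined strategy $\sigma$ plays $\tau$ up to the first visit to some
$w\in W$ and then switches permanently to $\sigma_w$. Because $O$ is a tail
objective, the strong Markov property at the hitting time of $W$ yields
$\PrA{\sigma}{s}{O \mid \Reach{W}}=1$, so
$\PrA{\sigma}{s}{O}\geq\PrA{\sigma}{s}{\Reach{W}}=\Val^{\Reach{W}}(s)$, and
hence $\Val^{O}(s)\geq\PrA{\sigma}{s}{O}\geq\Val^{\Reach{W}}(s)$.

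For $\Val^{O}(s) \leq \Val^{\Reach{W}}(s)$, I would fix a globally optimal
pure memoryless strategy $\sigma^{*}$ for $O$, supplied by
Proposition~\ref{prop:long-run-MD-det}, which reduces $\G$ to a finite
Markov chain. By Lemma~\ref{lem:01bscc}, in each BSCC $C$ of this chain the
probability $\PrA{\sigma^{*}}{c}{O}=x_C\in\{0,1\}$ is constant over $c\in C$.
Let $B$ be the union of BSCCs with $x_C=1$. Runs of a finite Markov chain
almost surely enter a BSCC, so
$\PrA{\sigma^{*}}{s}{O}=\PrA{\sigma^{*}}{s}{\Reach{B}}$. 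Every $c\in B$
satisfies $\Val^{O}(c)\geq\PrA{\sigma^{*}}{c}{O}=1$, hence $c\in W$ and
$B\subseteq W$; therefore
$\Val^{O}(s)=\PrA{\sigma^{*}}{s}{\Reach{B}}\leq\PrA{\sigma^{*}}{s}{\Reach{W}}\leq\Val^{\Reach{W}}(s)$.

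There is no substantial obstacle: the argument rests entirely on
Proposition~\ref{prop:long-run-MD-det}, Lemma~\ref{lem:01bscc}, and
Fact~\ref{fact:reach}. The only point requiring slight care is the
$(\geq)$ direction, where the stitched strategy $\sigma$ is not memoryless;
the tail property of $O$ is essential there to transfer the almost-sure
achievement of $O$ from any $w\in W$ under $\sigma_w$ to the full run
starting at $s$, via the strong Markov property at the hitting time of $W$.
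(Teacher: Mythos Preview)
Your proof is correct and follows essentially the same approach as the paper: the $(\leq)$ direction is identical (fix the memoryless optimal strategy from Proposition~\ref{prop:long-run-MD-det}, apply Lemma~\ref{lem:01bscc} to the BSCCs, and note the value-$1$ BSCCs are contained in $W$), and your $(\geq)$ argument via stitching merely spells out what the paper compresses into the single line ``Because $O$ is a tail objective, we easily obtain $\Val^O(s)\geq\Val^{\Reach{W}}(s)$.'' A minor simplification: Proposition~\ref{prop:long-run-MD-det} already gives one strategy optimal in \emph{every} state, so you need not pick separate $\sigma_w$ for each $w\in W$.
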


\begin{proof}
Proposition~\ref{prop:long-run-MD-det} gives us
a memoryless optimal strategy, $\sigma$.
By fixing it, we obtain a Markov chain on states of $\G$.
We denote by $W'$ the union of all BSCCs of this
Markov chain,
in which at least one state has a positive value.
By Lemma~\ref{lem:01bscc}, all states from $W'$ have, in fact, value $1$.
Since $W'\subseteq W$, and $\sigma$ is optimal, we get
\[
\Val^O(s)
=\PrA\sigma{s}{O}
=\PrA\sigma{s}{\Reach{W'}}
\leq\PrA\sigma{s}{\Reach{W}}
\leq\Val^{\Reach{W}}(s)
\] for every state $s$.
Because $O$ is a tail objective, we easily obtain
$\Val^O(s)\geq\Val^{\Reach{W}}(s)$.
\qed
\end{proof}

To prove Theorem~\ref{thm:rewards}, we start with the MDP case.
By Proposition~\ref{prop:long-run-MD-det},
pure memoryless
strategies are sufficient for optimizing the probability
of all the objectives considered in this theorem, 
so we can restrict ourselves to 
such strategies for 
this proof.
Given an objective $O$, we will write $W^{O}$ to denote the set of
states $s$ with $\Val^{O}(s)=1$.
As $\G$ is a MDP, optimal strategies for {\em reaching} any state in 
$W^O$ can be computed in 
polynomial time, by Fact~\ref{fact:reach}.
If $O$ is any of the objectives mentioned in the statement of 
Lemma~\ref{lem:reach-mdp-qual}, then 
by that Lemma, in order to compute optimal strategies and values for objective
$O$, 
it suffices to compute the set $W^{O}$ and optimal strategies 
for the objective $O$ in
states in $W^{O}$. The resulting optimal strategy ``stitches''
these and the optimal strategy for reaching $W^O$.

\begin{proposition}\label{prop:max-comp}
For every MDP, $\G$, and an objective $O=\CN$, $\DI$, or $\PM$,
the problem whether $s\in W^{O}$ is decidable in \PTIME{}-time.
If $s\in W^{O}$, then a strategy optimal in $s$ is computable in \PTIME{}-time.
\end{proposition}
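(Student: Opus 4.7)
The plan is to reduce the question to identifying the set $W^{O}$ in P-time, after which Lemma~\ref{lem:reach-mdp-qual} together with Fact~\ref{fact:reach} yield the value and a pure memoryless optimal strategy everywhere in P-time. Lemma~\ref{lem:PM-DI-eq} and Proposition~\ref{prop:long-run-MD-det} imply that $\DI$ and $\PM$ share the same values and the same optimal strategies, so it suffices to handle $O \in \{\CN, \DI\}$.

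For $O = \CN$ in a maximizing MDP, the set $W^{\CN}$ and an optimal pure memoryless strategy are already computable in P-time by the algorithm of~\cite{BBEKW10}, which exploits the submixing property of $\CN$. For a minimizing MDP, $\Val^{\CN}(s)=1$ amounts to Max being unable to achieve $\negCN$ with positive probability in the dual view; the reduction from $\negCN$ to reachability of $W_{\infty} \cup W_{+}$ used in the proof of Proposition~\ref{prop:long-run-MD-det} reduces this to the $\DI$-algorithm below (for computing $W_{\infty}$) plus a P-time safety-game computation for $W_{+}$.

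The heart of the proof is the $\DI$-algorithm for a maximizing MDP $\G$. First, compute the maximal end components (MECs) of $\G$ in P-time by the standard MEC decomposition. Inside each MEC $C$ (a sub-MDP in which Max has a strategy to stay forever), compute the maximum expected mean reward $\mu^{*}(C)$ achievable by Max via the classical linear program for mean-payoff MDPs; this is in P-time. Declare $C$ \emph{good} if $\mu^{*}(C) > 0$, and let $W_{g}$ be the union of the good MECs. The claim is that $W^{DI}$ equals the set of states from which Max can reach $W_{g}$ almost surely, and this set is computable in P-time via Fact~\ref{fact:reach}.

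For the forward inclusion, combine a pure memoryless strategy reaching $W_{g}$ a.s.\ with the pure memoryless mean-optimal strategy inside the good MEC entered: the strong law for random walks on a finite irreducible Markov chain implies that strictly positive mean reward forces the accumulated reward to diverge to $+\infty$ almost surely, so $\DI$ holds with probability $1$. For the converse, fix a pure memoryless optimal $\sigma$ at some $s \in W^{DI}$; in the induced Markov chain, Lemma~\ref{lem:01bscc} forces every BSCC reachable from $s$ under $\sigma$ to satisfy $\DI$ with probability $1$, hence to have strictly positive mean reward, and each such BSCC is a (necessarily good) end component. Stitching the reachability-optimal strategy outside $W_{g}$ to the mean-optimal strategies inside the good MECs yields the desired pure memoryless optimal strategy. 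The main technical hurdle I anticipate is the random-walk argument linking strictly positive mean reward to almost sure divergence to $+\infty$, the same ingredient that underlies the $\DI$ case of Proposition~\ref{prop:long-run-MD-det}; the minimizing-MDP case for $\DI$ and $\PM$ is handled dually by applying the same MEC-plus-reachability pipeline to $\negDI$, again invoking Proposition~\ref{prop:long-run-MD-det}.
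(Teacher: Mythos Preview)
Your proposal is correct, and for the central case---maximising $\DI$/$\PM$---it takes a genuinely different algorithmic route from the paper. The paper does \emph{not} use maximal end components. Instead it gives an iterative procedure (Procedure~MP): repeatedly compute a globally mean-payoff-optimal memoryless strategy $\sigma_{mp}$; if the optimal expected mean payoff from $s$ is $\le 0$ output \textsf{No}; otherwise find a BSCC of the chain induced by $\sigma_{mp}$ with almost-surely positive mean payoff, remove all states that reach it almost surely (recording the strategy there), and repeat on the residual MDP until $s$ is removed. Your one-shot pipeline---MEC decomposition, a mean-payoff LP inside each MEC to identify the ``good'' ones, then a single almost-sure-reachability computation---is the standard end-component methodology and is arguably cleaner; the paper's loop avoids MEC decomposition as a black box but may iterate $|V|$ times. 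Both rest on the same ergodic/random-walk fact (positive mean reward in an irreducible chain forces divergence to $+\infty$), which is exactly what the paper isolates in Lemma~\ref{lem:MC-classify}. For the remaining cases the two proofs essentially coincide: maximising $\CN$ is deferred to~\cite{BBEKW10}; maximising $\negCN$ goes through $W_\infty\cup W_+$ and the one-counter/safety computation for $W_+$. Two places where your write-up is thinner than it should be: for minimising $\CN$ you need $\Val^{\negCN}(s)=0$, which by Lemma~\ref{lem:reach-mdp-qual} is \emph{graph} (positive-probability) unreachability of $W_\infty\cup W_+$, not almost-sure reachability; and ``apply the same pipeline dually to $\negDI$'' should be spelled out as: a MEC $C$ is good for $\negDI$ iff the \emph{minimum} achievable mean payoff in $C$ is $\le 0$, then $W^{\negDI}$ is the almost-sure attractor of those MECs, and $W^{\DI}$ in the minimising MDP is the complement of their graph-reachability set. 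The paper bypasses this last case entirely by citing~\cite{BBEKW10} for $\negPM$.
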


\begin{proof} {\em (Sketch.)}
From Lemma~\ref{lem:PM-DI-eq}
we know that $\DI$ is equivalent to $\PM$, and thus we only have to consider
$O=\CN$ and $O=\PM$.
For a uniform presentation, we assume that $\G$ is a maximizing MDP, and consider
two cases: $O=\PM$,  and $\negCN$.
The remaining cases were solved in~\cite{BBEKW10} -- 
Theorem~3.1 there solves the case $O=\CN$, and Section~3.3 solves $O=\negPM$.

\smallskip

\noindent {$O = \PM $}:
We design an algorithm to decide whether 
$\max_{\sigma}\PrA{\sigma}{s}{\PM}=1$,
using the existing polynomial time algorithm,
based on linear programming, for maximizing the {\em expected} mean payoff
and computing optimal strategies for it 
(see, e.g.,~\cite{Puterman94}).
Note that, as shown in the appendix (Lemma~\ref{lem:MD-lim-def}),
it does not matter whether $\liminf$ or $\limsup$
is used in the definition of $\PM$.
Under a memoryless strategy $\sigma$, almost all runs in $\G$ reach one of the 
bottom strongly 
connected components (BSCCs).
Almost all runs initiated in some BSCC, $C$,
visit all states of $C$ infinitely often,
and it follows from standard Markov chain theory (e.g., \cite{Norris98}) 
that almost all runs in $C$
have the same mean payoff, which equals the expected mean payoff
for the Markov chain induced by $C$.

\begin{procedure}[h]
\caption{MP($s$)}\label{proc:mp}
\dontprintsemicolon
\KwData{A state $s$.}
\KwResult{Decide $\Val^\PM(s)\stackrel{\textrm{?}}{=}1$. If yes, return a strategy $\sigma$ with
$\PrA\sigma{s}{\PM}=1$.}

\Repeat{$s$ is cut off \nllabel{proc:mp-until}}{
Compute a strategy $\sigma_{mp}$ maximizing the expected mean payoff.
\;
\nllabel{proc:mp-exmax}

\lIf{$\Ex{\sigma_{mp}}{s}{\text{mean payoff}}\leq 0$}{\Return{\DataSty{No}}}
\;
\nllabel{proc:mp-no}

Fix $\sigma_{mp}$ to get a Markov chain on $\G$. Find a BSCC, $C$, with mean payoff almost surely positive.\\
\;
\nllabel{proc:mp-bscc}

Compute a strategy $\sigma_C$ maximizing the probability
of $\Reach{C}$.
\;
\nllabel{proc:mp-maxC}

\ForEach{$v$ with $\PrA{\sigma_C}{v}{\Reach{C}}=1$}%
{
Remove state $v$.
\;
\nllabel{proc:mp-cut}

\lIf{$v\in C$}%
{$\sigma(v) \leftarrow \sigma_{mp}(v)$}
\lElse%
{$\sigma(v) \leftarrow \sigma_C(v)$}
}
}%

\Return{$(\DataSty{Yes},\sigma)$}
\;
\nllabel{proc:mp-yes}
\end{procedure}

The algorithm is given here as Procedure~\FuncSty{MP}($s$).  Both
step~\ref{proc:mp-exmax}, as well as verifying the condition from
step~\ref{proc:mp-bscc}, can be done in
P-time, because, as observed above, this is equivalent to verifying
that the expected mean payoff in $C$ is positive, which
can be done in P-time (see \cite[Theorem~9.3.8]{Puterman94}).
Step~\ref{proc:mp-maxC} can be done in P-time by Fact~\ref{fact:reach}.
To obtain a formally correct MDP, we introduce a new state $z$ with a 
self-loop,
and after the removal of any state $v$ in step~\ref{proc:mp-cut}
of the for loop, 
we redirect all stochastic
transitions leading to $v$ to this new state $z$,
and eliminate all other transitions into $v$. The
reward of the new state $z$ is set to $0$. This will not affect the sign of
subsequent optimal expected mean payoffs starting from $s$, unless $s$
has been already removed.  Thus, the algorithm can be implemented so
that each iteration of the repeat-loop takes P-time, and so
the algorithm terminates in P-time, since in each
iteration at least one state must be removed.
If the algorithm outputs $(\DataSty{Yes},\sigma)$ then clearly
$\PrA\sigma{s}{\PM}=1$. On the other hand, by an easy
induction on the number of iterations of the repeat-loop
one can prove that if $\Val^\PM(s)=1$ then the following is an invariant
of line~\ref{proc:mp-until}: either $s$ has been removed, or
the maximal expected mean payoff starting in $s$ is positive.
In particular, the algorithm cannot output $\DataSty{No}$.
Thus we have completed the case when $O = \PM$.

\smallskip

\noindent {$O=\negCN$:}
Recall first the auxiliary objective
\(
\Nn \coloneqq 
\{ w\in \run_{\G} \mid \forall n\geq 0:\textstyle\sum\nolimits_{j=0}^n r(w(j))\geq 0\}
\)
from the proof of Proposition~\ref{prop:long-run-MD-det},
and also the sets
$W_\infty = \{ v \mid \Val^{\DI}(v)=1\}$, and
$W_+ = \{ v \mid \Val^\Nn(v)=1\}$.
Note that $W_\infty=W^\PM$, by Lemma~\ref{lem:PM-DI-eq}.
Finally, recall from the equation (\ref{eq:reduce_reach})
in the proof of Proposition~\ref{prop:long-run-MD-det},
that the probability of $\negCN$ is maximized
by almost surely reaching $W_\infty\cup W_+$
and then satisfying $\Nn$ or $\DI$.
We note that the strategy $\sigma_+$, optimal for $\Nn$, from the proof of
Proposition~\ref{prop:long-run-MD-det}, can be computed in polynomial time
by~\cite[Theorem~12]{VASS}.
The results on $\PM$ and Fact~\ref{fact:reach} conclude the proof.
\qed
\end{proof}

Now we finish the proof of Theorem~\ref{thm:rewards}.
Proposition~\ref{prop:max-comp} and Fact~\ref{fact:reach}
together establish the MDP case.
Establishing the NP $\cap$ coNP upper bound for SSGs 
proceeds in a standard way:
guess a strategy for one player, fix it to get a MDP,
and verify in polynomial time (Proposition~\ref{prop:max-comp}) that
the other player cannot do better than the given value $p$.
To decide whether, e.g., $\Val^{O}(s) \geq p$, guess a strategy 
$\sigma$ for Max, fix it to get an MDP,
and verify that Min has no strategy
$\pi$ so that $\PrA{\sigma,\pi}{s}{O} < p$.
Other cases are similar.
\qed

\noindent Finally, we show that the
upper bound from Theorem~\ref{thm:rewards}
is hard to improve upon:

\begin{proposition}\label{prop:nb-hard}
Assume that a SSG, $\G$, a state $s$, and a reward function $r$
are given, and let $O$ be an objective from $\{\CN,\DI,\PM\}$.
Moreover, assume the property (promise) that
either $\Val^{O}(s) =1$ or $\Val^{O}(s) = 0$.
Then deciding  which is the case
is at least as hard as Condon's~\cite{C92} {\em quantitative} reachability
problem w.r.t.\ polynomial time reductions.
\end{proposition}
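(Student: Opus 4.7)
The plan is to give a polynomial-time reduction from Condon's quantitative reachability problem to the promise problem in the statement. Let $\G_0$ be a stopping Condon SSG with start state $s_0$, target $t$, and sink $\bot$, and write $p := \Val^{\Reach{\{t\}}}(s_0)$; Condon's problem is to decide whether $p \ge \tfrac{1}{2}$. From $\G_0$ I build a rewarded SSG $\G$ by keeping the ownership and transitions of $\G_0$, assigning reward $0$ to every original state, rerouting every edge into $t$ to a fresh deterministic state $t'$ of reward $+1$ whose single outgoing edge returns to $s_0$, and symmetrically rerouting every edge into $\bot$ to a fresh deterministic state $\bot'$ of reward $-1$ that also returns to $s_0$. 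By construction, every play of $\G$ consists of independent ``rounds'' starting from $s_0$, and each round contributes exactly $+1$ or $-1$ to the running reward.

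I will argue that $\Val^{\DI}(s_0) = 1$ whenever $p > \tfrac12$ and $\Val^{\DI}(s_0) = 0$ whenever $p \le \tfrac12$; the $\PM$ case is then immediate from Lemma~\ref{lem:PM-DI-eq}, and the $\CN$ case is obtained by the same construction with the $\pm 1$ signs flipped, yielding $\Val^{\CN}(s_0) = 1$ iff $p \ge \tfrac12$ (using recurrence of the symmetric random walk on $\mathbb{Z}$ to settle the boundary case $p = \tfrac12$). Thus the promise $\Val^{O}(s_0) \in \{0,1\}$ is satisfied by construction, and deciding which case holds answers the Condon question ``$p > \tfrac12$?'' (respectively ``$p \ge \tfrac12$?''); since either variant is polynomial-time equivalent to the standard quantitative Condon problem, this proves the desired hardness.

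For the subclaim on $\DI$: when $p > \tfrac12$, fix a pure memoryless Condon-optimal Max strategy $\sigma^*$ (such strategies exist by the classical theory of finite-state SSG reachability~\cite{C92}). Against any Min strategy $\pi$ in $\G$, the per-round probability of reaching $t'$ is $\ge p$, so the round rewards $X_1, X_2, \ldots \in \{-1,+1\}$ satisfy $\mathbb{E}^{\sigma^*,\pi}_{s_0}[X_k \mid \mathcal{F}_{k-1}] \ge 2p - 1 > 0$, where $\mathcal{F}_{k-1}$ encodes history up through round $k{-}1$. The partial sums $S_n := \sum_{k=1}^n X_k$ then form a bounded-increment submartingale with uniformly positive drift, so $S_n \to +\infty$ almost surely by a martingale law of large numbers, giving $\PrA{\sigma^*,\pi}{s_0}{\DI} = 1$ and hence $\Val^{\DI}(s_0) = 1$. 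Conversely, when $p \le \tfrac12$, fixing a memoryless Condon-optimal Min strategy $\pi^*$ caps the per-round $t$-reaching probability at $\le \tfrac12$ against every Max strategy, making $(S_n)_n$ a $\pm 1$-valued supermartingale on $\mathbb{Z}$; standard random-walk arguments (in the spirit of those invoked in Proposition~\ref{prop:long-run-MD-det}, following~\cite[Chapter~8]{Chung01}) then show that $S_n$ cannot satisfy $\liminf_n S_n = +\infty$ with positive probability, whence $\Val^{\DI}(s_0) = 0$.

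The main obstacle is the second (subcritical) step: verifying that no history-dependent Max strategy can coerce $\liminf S_n = +\infty$ when the per-round drift is capped at $\le 0$. This amounts to a careful application of random-walk/martingale theory on $\mathbb{Z}$ for supermartingales with bounded $\pm 1$ increments, which is already routine in this paper. The remaining verifications — polynomial-time realizability of the construction and correctness of the ``independent rounds'' decomposition — are immediate from the explicit restart gadget.
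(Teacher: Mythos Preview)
Your proposal is correct and follows essentially the same route as the paper: build a restart gadget that turns each reachability round into a $\pm 1$ reward step, so that the limit objective reduces to a biased random walk governed by the Condon reachability value, with the boundary case $p=\tfrac12$ handled via recurrence. The only differences are cosmetic (you reroute into fresh reward states and flip signs separately for $\CN$, while the paper modifies $t,t'$ in place and treats both objectives with one sign convention), and you spell out the supermartingale/submartingale argument that the paper compresses into ``basic random walk theory.''
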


\begin{proof}
The problem studied by Condon~\cite{C92} is: given
a SSG, $\H$, an initial state $s$, and a target state $t$, decide
whether
\(
\Val^{\Reach{t}}(s)\geq 1/2
\).
Deciding whether $\Val^{\Reach(t)}(s) > 1/2$ is P-time equivalent.
Moreover,
we may safely assume 
there is a state $t'\not = t$, such that 
whatever strategies are employed, we reach $t$ or $t'$, 
with probability 1.
Consider the following reduction:
given a SSG, $\H$,  with distinguished states $s$, $t$, and $t'$ as
above,  produce a new SSG, $\G$, with rewards as follows:
remove
all outgoing transitions from $t$ and $t'$, 
add transitions
$t \btran{}s$ and $t'\btran{}s$, and
make both $t$ and $t'$ belong to Max.
Let $r$ be the reward function over states of $\G$,
defined as follows: $r(t):=-1$, $r(t'):=+1$ and $r(z):=0$ for all other 
$z\not\in \{t,t'\}$.
It follows from basic random walk theory that in $\G$,
$\Val^{\CN}(s)= 1$ if $\Val^{\Reach{t}}(s)\geq 1/2$, 
and $\Val^{\CN}(s)= 0$ otherwise.
Likewise, 
$\Val^{\DI}(s) =1 $ if $\Val^{\Reach{t'}}(s) > 1/2$,
and $\Val^{\DI}(s)=0$ otherwise,
and identically for the objective $\PM$
which we already showed to be equivalent to $\DI$.
\qed
\end{proof}

\section{Termination}
\label{sec:boundary}

In this section we prove Theorem~\ref{thm:NT-val}.
We continue viewing OC-SSGs as finite-state SSGs
with rewards, as discussed in the introduction.
However, for notational convenience this time we consider rewards 
on \emph{transitions}
rather than on states. It is easy to observe that Theorem~\ref{thm:rewards}
remains valid even if we sum rewards on transitions instead of
rewards on states in the definition of $\CN$.
We fix a SSG, $\G$, with state set $V$, and a reward
function $r$.

\begin{lemma}\label{lem:NT-MD}
Assume that $j\geq |V|$. Then
for all states $s$:
$\Val^{\Term{j}}(s)=1$
iff
$\Val^{\CN}(s)=1$.
\end{lemma}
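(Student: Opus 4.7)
The easy direction $\Val^{\CN}(s)=1 \Rightarrow \Val^{\Term{j}}(s)=1$ is immediate: any run $w$ with $\liminf_n \sum_{i=0}^n r(w(i)) = -\infty$ must cross every negative level, so $\CN \subseteq \Term{j}$ for every $j > 0$ and therefore $\Val^{\CN}(s) \leq \Val^{\Term{j}}(s)$. For the converse I would prove the contrapositive: $\Val^{\CN}(s) < 1 \Rightarrow \Val^{\Term{j}}(s) < 1$ for $j \geq |V|$. First I would fix, using Proposition~\ref{prop:long-run-MD-det}, a pure memoryless Min strategy $\pi^*$ that is optimal for $\CN$; this gives
\[
\Val^{\Term{j}}(s) \; \leq \; \sup\nolimits_\sigma \PrA{\sigma,\pi^*}{s}{\Term{j}},
\]
so it suffices to bound this supremum strictly below $1$. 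Fixing $\pi^*$ in $\G$ produces a maximizing MDP $\mathcal{D}$ in which $\sup_\sigma \PrA{\sigma}{s}{\CN} = \Val^{\CN}(s) < 1$ by optimality of $\pi^*$.

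If, toward a contradiction, $\sup_\sigma \PrA{\sigma}{s}{\Term{j}} = 1$ held in $\mathcal{D}$, then the qualitative termination result for maximizing one-counter MDPs from~\cite{BBEKW10} would supply a pure memoryless counter-oblivious strategy $\sigma^{\#}$ with $\PrA{\sigma^{\#}}{s}{\Term{j}}=1$. Fixing $\sigma^{\#}$ yields a finite Markov chain $M$ on $V$ in which the cumulative reward from $s$ reaches $-j$ almost surely, while the $\CN$-probability is still $\leq \Val^{\CN}(s)<1$. Lemma~\ref{lem:01bscc} then forces every reachable BSCC $C$ of $M$ to have $\Pr_M(\CN \mid \text{enter } C)\in\{0,1\}$, and since the overall $\CN$-probability is $<1$ at least one reachable BSCC $C^*$ must have $\CN$-probability $0$. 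A standard dichotomy for finite strongly connected reward-$\{-1,0,1\}$ Markov chains then splits $C^*$ into: (a) stationary mean reward $\mu_{C^*}>0$, in which case the cumulative reward tends to $+\infty$ almost surely inside $C^*$; or (b) $\mu_{C^*}=0$ and every directed cycle of $C^*$ has zero reward sum, which yields a potential $\phi:C^*\to\{0,\dots,|C^*|-1\}$ with $r(u\to v)=\phi(v)-\phi(u)$ witnessing boundedness of cumulative reward inside $C^*$.

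The decisive step, and the main obstacle, is to use $j\geq|V|$ to contradict $\PrA{\sigma^{\#}}{s}{\Term{j}}=1$. The plan is to exhibit a positive-probability finite trajectory $s \to u \in C^*$ in $M$ along which the cumulative reward never drops to $-j$, together with a positive-probability continuation inside $C^*$ along which it still never reaches $-j$. A simple path $s\to u$ has length $\leq|V|$ and hence cumulative rewards in $[-(|V|-1),\,|V|-1]$, which stays above $-j$ as soon as $j\geq|V|$. In case (b) I would further extend this path inside $C^*$ to a $\phi$-minimizing vertex $u^*$, so that the cumulative reward at any subsequent state $v\in C^*$ equals the entry reward at $u^*$ plus $\phi(v)\geq 0$; in case (a) a standard random-walk tail bound for Markov chains with positive drift yields a positive probability of the cumulative reward never dropping by more than any prescribed amount from the entry level. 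Either way $\Pr_M(\neg\Term{j}\mid s)>0$, the desired contradiction. The technical heart is this calibration: the downward excursion along the approach path to $C^*$ must be combined with the downward excursion inside $C^*$ (controlled via the potential $\phi$ in case (b), or via the drift in case (a)) so that the total excursion is bounded in terms of $|V|$ and hence kept strictly above $-j$ for $j\geq|V|$.
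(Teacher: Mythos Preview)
Your reduction of the SSG case to the maximizing-MDP case by fixing a pure memoryless Min strategy optimal for $\CN$ (via Proposition~\ref{prop:long-run-MD-det}) is exactly the paper's argument. The paper then simply cites \cite[Section~4]{BBEKW10} for the MDP case and stops, whereas you attempt to spell that case out. Note, however, that your own argument still invokes \cite{BBEKW10} for the existence of a counter-oblivious pure memoryless $\sigma^{\#}$ with $\PrA{\sigma^{\#}}{s}{\Term{j}}=1$; you should verify that this particular fact is obtained in \cite{BBEKW10} independently of the $\Term{j}$--$\CN$ equivalence you are establishing, or the appeal is circular.

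In your BSCC analysis there is a genuine calibration gap. In case~(a) the claim that in a positive-drift irreducible Markov chain the cumulative reward has, from a \emph{given} entry state, positive probability of never dropping by more than \emph{any} prescribed amount is false: if every outgoing edge of $u$ has reward $-1$, the process drops by $1$ immediately. What is true is that \emph{some} state $u^\star\in C^*$ satisfies $\Pr_{u^\star}(\inf_n S_n\ge 0)>0$; otherwise every state of $C^*$ almost surely reaches level $-1$, and iterating via the strong Markov property yields $\Pr(\CN)=1$ in $C^*$, contradicting your choice of $C^*$. In case~(b), the ``extend inside $C^*$ to a $\phi$-minimizer $u^*$'' step can cost an additional $|C^*|-1$ in cumulative reward on top of the $|V|-1$ already spent on the simple path to $u$, and for $j=|V|$ this can dip below $-j$. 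Both issues are repaired the same way: since $C^*$ is a reachable BSCC, every vertex of $C^*$ is reachable from $s$, so take a simple path from $s$ directly to $u^\star$ (in case~(a)) or to $u^*$ (in case~(b)). Along this simple path the cumulative reward stays $\ge -(|V|-1)>-j$, and from the endpoint onward it never drops below the entry value with positive probability (case~(a)) or with certainty (case~(b), since $\phi(\cdot)\ge\phi(u^*)$). With these adjustments your argument goes through and supplies an explicit proof of what the paper black-boxes via \cite{BBEKW10}.
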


\begin{proof}
If $\G$ is a maximizing MDP, the proposition is true
by results of~\cite[Section 4]{BBEKW10}.
Consider now the general case, when $\G$ is a SSG.
If $\Val^{\CN}(s) = 1$ then clearly $\Val^{\Term{j}}(s) = 1$.
Now assume that $\Val^{\Term{j}}(s) = 1$ and consider the memoryless
strategy of player Min, optimal for $\CN$,
which exists by Proposition~\ref{prop:long-run-MD-det}.
Fixing it, we get a maximizing MDP, in which
the value of $\Term{j}$ in $s$ is, of course, still $1$.
We already know from the above discussion that the
value of $\CN$ in $s$ is thus also $1$ in this MDP.
Since the fixed strategy for Min was optimal, we get that
$\Val^{\CN}(s) = 1$ in $\G$.
Thus, if $\Val^{\Term{j}}(s) = 1$ then $\Val^{\CN}(s)=1$.
\qed
\end{proof}

\vspace{-1em}

\paragraph{Proof of Theorem~\ref{thm:NT-val}.}
For cases where $j\geq |V|$, the theorem follows directly from Lemma~\ref{lem:NT-MD}
and Theorem~\ref{thm:rewards}.
If $j<|V|$ then we have to perform a simple reachability
analysis, similar to the one presented in~\cite{BBEKW10}.
The following SSG, $\G'$, keeps track of the accumulated rewards
as long as they are between $-j$ and $|V|-j$: its set of states is
$V'\coloneqq\{(u,i) \mid u\in V, -j\leq i \leq |V|-j\}$.

States $(u,i)$ with $i\in\{-j,|V|-j\}$ are absorbing,
and for $i\notin\{-j,|V|-j\}$ we have $(u,i)\tran{}(t,k)$
iff $u\tran{}t$ and $k=i+r(u\tran{}t)$.
Every $(u,i)$ belongs to the player who owned $u$.
The probability of every transition $(u,i)\tran{}(t,k)$, $u\in V_P$,
is the same as that of $u\tran{}t$.
There is no reward function for $\G'$, we consider a reachability objective
instead, given by the target set
$R \coloneqq \{(u,-j)\mid u\in V\}\cup\{(u,i)\mid -j\leq i\leq |V|-j, \Val^{\CN}(u)=1\}$.
Finally, let us observe that, by Lemma~\ref{lem:NT-MD},
$\Val^{\Reach{R}}((s,0))=1$
iff
$\Val^{\Term{j}}(s)=1$.
Since the size of $\G'$ is polynomial in the size of $\G$,
Theorem~\ref{thm:NT-val} is proved.
\qed

\begin{proposition}\label{prop:NT-strat}
For all $j>0$, $s\in V$,
there are pure strategies, $\sigma$ for Max, and $\pi$ for Min, such that
\begin{enumerate}
\item\label{NT-Max}
If $\Val^{\Term{j}}(s)=1$ then $\sigma$ is optimal in $s$ for $\Term{j}$.
\item\label{NT-Min}
If $\Val^{\Term{j}}(s)<1$ then $\sup_\tau\PrA{\tau,\pi}{s}{\Term{j}}<1$.
\end{enumerate}
Moreover, $\sigma$ is memoryless, and $\pi$ only uses memory
of size $|V|$.
Such strategies can be computed 
in P-time for MDPs.
\end{proposition}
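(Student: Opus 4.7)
The plan is to combine (i) the pure memoryless optimal strategies $\sigma_\CN$ and $\pi_\CN$ for $\CN$ given by Proposition~\ref{prop:long-run-MD-det} with (ii) pure memoryless optimal reachability strategies in the polynomial-size finite SSG $\G'$ from the proof of Theorem~\ref{thm:NT-val}, which exist by Fact~\ref{fact:reach}. Set $W := \{u \in V : \Val^{\CN}(u) = 1\}$ and, for each $u$, $j^*(u) := \sup\{j > 0 : \Val^{\Term{j}}(u) = 1\}$ (with $j^*(u) = 0$ if no such $j$ exists); by Lemma~\ref{lem:NT-MD}, $j^*(u) = \infty$ iff $u \in W$, and otherwise $j^*(u) \leq |V| - 1$.

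For Max, set $\sigma := \sigma_\CN$ on $W$, so that whenever play enters $W$ the accumulated reward diverges to $-\infty$ almost surely, hitting every positive counter. For $u \in V_\top \setminus W$ with $j^*(u) \geq 1$, the Bellman equation for $\Val^{\Term{\cdot}}$ guarantees a transition $u \to v$ with either $j^*(u) + r(u \to v) \leq 0$ (immediate termination at the boundary counter) or $j^*(v) \geq j^*(u) + r(u \to v)$; pick one such $v$ as $\sigma(u)$. The resulting $\sigma$ is pure and memoryless. To verify optimality at any $(s, c_0)$ with $c_0 \leq j^*(s)$, I would use two observations: first, because $\Term{j'} \supseteq \Term{j}$ for $j' \leq j$ (reaching accumulated reward $-j$ forces passing through $-j'$), the chosen action $\sigma(u)$ is in fact optimal for every initial counter $c \in [1, j^*(u)]$, not only at $c = j^*(u)$; second, the invariant $c \leq j^*(u)$ is preserved at every reachable configuration under $\sigma$ against any Min strategy (applying the Bellman equation also at Min and random states). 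Hence the trajectory is confined to $W \cup \{(u, c) : u \notin W,\; 1 \leq c \leq j^*(u)\}$, a region that is finite off $W$, and a bottom-SCC analysis of the induced finite Markov chain, transferred to $\G'$ and using the value-$1$ reachability strategy from Fact~\ref{fact:reach}, rules out non-terminating loops and yields almost-sure termination or entry to $W$.

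For Min, let $\pi$ carry memory of size $|V|$ tracking the current counter, capped at $|V|$. While the memory is strictly below $|V|$, $\pi$ plays a pure memoryless Min-optimal reachability strategy $\hat\pi$ in $\G'$ (existing by Fact~\ref{fact:reach}); once the memory reaches the cap, $\pi$ switches permanently to the counter-oblivious $\pi_\CN$. For $(s, j)$ with $\Val^{\Term{j}}(s) < 1$, the $\G'$ analysis ensures that against any Max strategy, with probability at least $1 - \Val^{\Term{j}}(s) > 0$ the play either already witnesses $\neg \Term{j}$ during the $\hat\pi$ phase, or first reaches counter $|V|$ at some state $u \notin W$. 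In the latter case $\Val^{\CN}(u) < 1$, so $\pi_\CN$ keeps the probability of $\neg \CN$ above a uniform constant $\delta > 0$ depending only on $\G$; a random-walk argument in the bottom SCCs of the Markov chain induced by $\pi_\CN$ then shows that, conditionally on $\neg \CN$, the subsequent maximum downward excursion of the accumulated reward stays below $|V|$ with uniform positive probability, so the counter never returns to $0$ and $\neg \Term{j}$ holds with uniform positive probability.

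Polynomial-time computation in the MDP case follows because $W$ and $\sigma_\CN$ (or $\pi_\CN$) are computable in P-time by Proposition~\ref{prop:max-comp}, and $j^*$ together with the relevant optimal actions (for $\sigma$, or for $\hat\pi$) reduce to reachability analyses on the polynomial-size $\G'$, which are in P-time by Fact~\ref{fact:reach}. The main technical obstacle is the last step of Min's verification: turning the $\G'$ slack plus the uniform $\neg \CN$ bound under $\pi_\CN$ into a uniform positive lower bound on $\Pr[\neg \Term{j}]$ requires a careful random-walk analysis ruling out that Max, once the counter exceeds $|V|$, could recover and drive the counter back to $0$ with probability arbitrarily close to $1$.
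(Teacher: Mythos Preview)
Your overall architecture---combine the memoryless $\CN$-optimal strategies with optimal reachability strategies in the polynomial-size SSG $\G'$---matches the paper's proof. Two points deserve comment, one where you overcomplicate and one where there is a real gap.

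For Min, the ``main technical obstacle'' you flag at the end is not an obstacle at all. Once $\pi_\CN$ is fixed, you obtain a maximizing OC-MDP, and the MDP case of Lemma~\ref{lem:NT-MD} (which is \cite[Section~4]{BBEKW10}) says directly that in this MDP $\Val^{\Term{|V|}}(u)<1$ for every $u\notin W$. Hence $\sup_\tau \PrA{\tau,\pi_\CN}{u}{\Term{|V|}}<1$ with no additional random-walk analysis required; the strict gap $1-\Val^{\Term{j}}(s)$ from the $\G'$ phase composes with this to give the result. The paper likewise just cites Lemma~\ref{lem:NT-MD} here.

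For Max, however, your memoryless construction has a genuine gap. Choosing at each $u\notin W$ \emph{any} transition $u\to v$ with $j^*(v)\geq j^*(u)+r(u\to v)$ only guarantees that the play stays inside the value-$1$ region; it does not prevent Max from cycling there forever without ever terminating or entering $W$. (This is the standard pitfall that staying in an attractor is not the same as progressing through it.) Your appeal to a ``bottom-SCC analysis \ldots\ using the value-$1$ reachability strategy from Fact~\ref{fact:reach}'' does not close this, because the optimal reachability strategy in $\G'$ is memoryless in $(u,i)$, not in $u$ alone, and your $\sigma$ is not that strategy. The paper handles this differently: it first takes the finite-memory strategy $\sigma$ induced by an optimal reachability strategy in $\G'$ (which \emph{is} almost-surely winning), and then flattens it to a memoryless $\sigma'$ by setting $\sigma'(u):=\sigma(w_u)$, where $w_u$ is a history under $\sigma$ that reaches $u$ with the \emph{maximal} accumulated reward $i_u$. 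Because $\sigma'$ is derived from an actual almost-surely-winning play rather than from an abstract Bellman choice, a double induction on $|V|-j-i_u$ and on the current reward shows termination; this is the step your $j^*$-based construction is missing.
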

\noindent
The proof goes along the lines of the proof of Theorem~\ref{thm:NT-val}.
It can be found
in the appendix, Section~\ref{sec:NT-strat}, 
together with an example that shows the memory use in $\pi$ 
is necessary.

\smallskip

Similarly, both $\Val^{\Term{j}}(s)=0$ and $\Val^{\Term{j}}(s)>0$
are witnessed by pure and memoryless strategies for the respective players.
Deciding which is the case is in P-time,
by assigning the random states to player Max,
obtaining a non-stochastic 2-player one-counter game, and 
using, e.g.,~\cite[Theorem~12]{VASS}.
Finally, we note that from Proposition~\ref{prop:nb-hard} and Lemma~\ref{lem:NT-MD}, it follows that:
\begin{corollary}\label{cor:hard}
Given an SSG, $\G$, and reward function $r$,
deciding whether the value of the termination objective $\Term{j}$  equals 1
is at least as hard as Condon's~\cite{C92} {\em quantitative} reachability
problem, w.r.t.\  P-time many-one reductions.
\end{corollary}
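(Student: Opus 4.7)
The plan is an immediate composition of the two previously established ingredients. Given an instance of Condon's quantitative reachability problem --- a finite-state SSG $\H$ with an initial state $s$ and a target state $t$ --- I would first apply the polynomial-time reduction given in the proof of Proposition~\ref{prop:nb-hard}, instantiated with the objective $\CN$, to produce a reward-labelled SSG $\G$ with state set $V$ and distinguished starting state $s$ such that $\Val^{\CN}(s) = 1$ in $\G$ precisely when $\Val^{\Reach{t}}(s) \geq 1/2$ in $\H$, and $\Val^{\CN}(s) = 0$ otherwise. In particular $\Val^{\CN}(s) \in \{0,1\}$ is guaranteed by construction, so no promise on the target instance is actually required in the eventual reduction.

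Next I would set $j := |V|$ and invoke Lemma~\ref{lem:NT-MD}: since $j \geq |V|$, the lemma asserts that $\Val^{\Term{j}}(s) = 1$ iff $\Val^{\CN}(s) = 1$. Composing the two equivalences, deciding ``$\Val^{\Term{j}}(s) = 1$?'' on the input $(\G, s, j)$ returns exactly the answer to Condon's problem on $(\H, s, t)$. The transformation $\H \mapsto (\G, s, j)$ is computable in polynomial time (the counter value $j = |V|$ has polynomial bit-length), so this yields the desired P-time many-one reduction. There is essentially no obstacle --- the main work has already been done in Proposition~\ref{prop:nb-hard} and Lemma~\ref{lem:NT-MD}; the only observations needed are that the hypothesis $j \geq |V|$ of the lemma is trivially arranged by taking $j := |V|$, and that the $\{0,1\}$-valuedness of $\Val^{\CN}(s)$ produced by Proposition~\ref{prop:nb-hard}'s construction carries over to the dichotomy $\Val^{\Term{j}}(s) = 1$ vs.\ $\Val^{\Term{j}}(s) < 1$ that the decision problem asks about.
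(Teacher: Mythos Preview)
Your proposal is correct and follows exactly the approach the paper itself indicates: the corollary is stated immediately after noting that it follows from Proposition~\ref{prop:nb-hard} and Lemma~\ref{lem:NT-MD}, and you have simply spelled out the composition (take the $\CN$-instance produced by the reduction of Proposition~\ref{prop:nb-hard}, then choose $j=|V|$ so that Lemma~\ref{lem:NT-MD} applies). Your observation that the construction in Proposition~\ref{prop:nb-hard} always yields $\Val^{\CN}(s)\in\{0,1\}$, so that no promise is needed on the target side, is a useful clarification.
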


\bibliography{bibliography}

\newpage
\appendix

\section{Appendix}

In the entire appendix, when referring to MDPs and SSGs, we
mean {\em finite-state} MDPs and SSGs.

\subsection{Proof of Proposition~\ref{prop:long-run-MD-det}, objectives
$\negDI$, $\PM$ and $\negPM$}
\label{sec:sumbix}

An objective $O$ is \emph{submixing} if for every run
$w=u_1 v_1 u_2 v_2 \cdots u_k v_k \cdots$, such that 
$u_i$ and $v_i$ are finite paths for every $i$,
and such that both
$u=u_1 u_2 \cdots u_k \cdots$ and
$v=v_1 v_2 \cdots v_k \cdots$ are also runs, we have
$w \in O \implies (u\in O \lor v\in O)$.
This notion is taken directly from~\cite{Gimbert-STACS07}, where it has
been defined in a more general setting. (See also \cite[Section 3]{BBEKW10}
for more details.)
By \cite[Theorem~1]{Gimbert-STACS07}, for every maximizing MDP and
every tail submixing objective, $O$,
player Max has a pure and memoryless optimal strategy.

\begin{lemma}
The objective $\negDI$ is a submixing and
tail objective.
\end{lemma}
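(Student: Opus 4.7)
The plan splits naturally into two independent parts.

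\textbf{Tail property (routine).} Dropping any finite prefix $w(0)\cdots w(k-1)$ from a run shifts the partial-sum sequence $n\mapsto\sum_{i=0}^{n}r(w(i))$ by a fixed finite constant and re-indexes it; neither operation affects whether the $\liminf$ is finite. I would dispose of this in one line.

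\textbf{Submixing (the real content).} Let $w=u_1v_1u_2v_2\cdots$ with $u$ and $v$ both runs, as in the definition. I would prove the contrapositive: assume $u,v\notin\negDI$ and show $w\notin\negDI$. The starting observation is that $\liminf_{n}S_n=+\infty$ is equivalent to $S_n\to+\infty$ in the usual sense, so the hypothesis becomes $\sum_{i=0}^{n}r(u(i))\to+\infty$ and $\sum_{i=0}^{n}r(v(i))\to+\infty$.

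Next introduce bookkeeping: let $P_k,Q_k$ denote the total lengths of $u_1\cdots u_k$ and $v_1\cdots v_k$, and set $\beta_k=\sum_{i=0}^{Q_k-1}r(v(i))$. Any position $n$ of $w$ sits in a unique block. If it sits at offset $j$ inside $u_{k+1}$, the interleaving structure gives directly
\[
\sum_{i=0}^{n}r(w(i))\;=\;\Bigl(\sum_{i=0}^{P_k+j}r(u(i))\Bigr)+\beta_k,
\]
with an entirely symmetric identity when $n$ sits in a $v$-block. Both summands tend to $+\infty$ as $n\to\infty$: the first because $P_k+j\to\infty$ and $u$'s partial sums diverge, the second because $v$ being an infinite run forces infinitely many $v_i$ to be nonempty, whence $Q_k\to\infty$ as $k\to\infty$, so $\beta_k\to+\infty$. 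Therefore $\sum_{i=0}^{n}r(w(i))\to+\infty$, contradicting $w\in\negDI$.

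\textbf{Main obstacle.} There isn't much of one. The only spot to be careful is the last bookkeeping point, namely that $Q_k$ (resp.\ $P_k$) really tends to infinity in the relevant case of the interleaving: the infinite-run hypothesis on $u$ and $v$ must be used to conclude that each contributes infinitely many nonempty blocks to $w$. Once that is noted, everything else is mechanical, and no appeal to probability or to the structure of the graph is needed—submixing of $\negDI$ is a purely combinatorial property of interleaved sequences of bounded integer rewards.
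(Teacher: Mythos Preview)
Your proof is correct and follows essentially the same idea as the paper's: the partial sum of $w$ at any position decomposes as a partial sum of $u$ plus a partial sum of $v$, and if both of the latter diverge to $+\infty$ then so does the former. The only cosmetic difference is framing---the paper states the direct inequality $L_a\geq L_b+L_c$ on the $\liminf$s (valid whenever the right-hand side is not $\infty-\infty$) and specializes, whereas you argue the contrapositive using the convenient reformulation $\liminf S_n=+\infty\iff S_n\to+\infty$; the careful point you flag about $P_k,Q_k\to\infty$ is exactly what is needed and is implicit in the paper's ``easy to verify''.
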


\begin{proof}
Obviously it is tail. As for the submixing property,
let $\{a_i\}_{i=1}^\infty$ be a sequence of numbers, and
consider an arbitrary splitting of this sequence into two infinite
subsequences $\{b_i\}_{i=1}^\infty$, $\{c_i\}_{i=1}^\infty$.
For $x\in\{a,b,c\}$ we define
\[
L_x\coloneqq\liminf_{n\to\infty}\sum_{i=1}^n x_i \ .
\]
It is easy to verify that
if at least one of $L_b$, $L_c$ is finite, or if they are infinite
with the same sign, then $L_a \geq L_b + L_c$.
In particular, if $L_a<\infty$ then $\min\{L_b,L_c\}<\infty$.
Applying this to the sequences of rewards finishes the proof.
\qed\end{proof}

\noindent
By changing the $\liminf$ to $\limsup$ in the definition
of $\PM$ we obtain a new objective:
\[
\PM_{+}\coloneqq\{w\in \run_{\G}\mid
\limsup_{n\to\infty} \sum_{i=0}^{n-1} r(w(i))/n>0\} \ .
\]

\begin{lemma}\label{lem:PM-sub}
Both $\PM_{+}$ and $\negPM$ are tail and submixing.
\end{lemma}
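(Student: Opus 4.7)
The plan begins with the tail property, which is immediate for both objectives: altering a finite prefix of a run shifts each partial sum $\sum_{i=0}^{n-1} r(w(i))$ by a bounded constant, which vanishes after dividing by $n$, so the defining $\liminf$ and $\limsup$ conditions of $\PM_{+}$ and $\negPM$ are preserved.

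For submixing I will set up common notation. Given a decomposition $w = u_1 v_1 u_2 v_2 \cdots$ with both $u = u_1 u_2 \cdots$ and $v = v_1 v_2 \cdots$ runs, let $n_u(n)$ and $n_v(n)$ count the $u$-steps and $v$-steps among the first $n$ steps of $w$, and write $S_x(k) = \sum_{i=0}^{k-1} r(x(i))$ for $x \in \{w,u,v\}$. Then $n_u(n) + n_v(n) = n$ and $S_w(n) = S_u(n_u(n)) + S_v(n_v(n))$. Crucially, since the definition of submixing forces both $u$ and $v$ to be infinite runs, $n_u(n), n_v(n) \to \infty$ as $n \to \infty$, so no bounded-side boundary case ever arises.

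For $\PM_{+}$ the plan is direct. If $w \in \PM_{+}$, pick $\eps > 0$ and a subsequence $n_k \to \infty$ with $S_w(n_k)/n_k \geq \eps$. The identity above yields $S_u(n_u(n_k)) + S_v(n_v(n_k)) \geq \eps(n_u(n_k) + n_v(n_k))$, so at least one of $S_u(n_u(n_k))/n_u(n_k)$ and $S_v(n_v(n_k))/n_v(n_k)$ is $\geq \eps$. A pigeonhole step then picks one side, say $u$, for which this occurs along infinitely many $k$; since $n_u(n_k) \to \infty$ along that sub-sub-sequence, this witnesses $\limsup_m S_u(m)/m \geq \eps$, i.e.\ $u \in \PM_{+}$.

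For $\negPM$ I will argue by contrapositive: suppose $u, v \notin \negPM$, so there exist $\eps > 0$ and $M$ with $S_u(m), S_v(m) \geq \eps m$ for all $m \geq M$. For $n$ large enough both $n_u(n), n_v(n) \geq M$, whence $S_w(n) \geq \eps\, n_u(n) + \eps\, n_v(n) = \eps n$; hence $\liminf_n S_w(n)/n \geq \eps > 0$, so $w \notin \negPM$. The only subtlety — and what makes the argument clean — is the infiniteness of both $u$ and $v$, which is built into the definition of submixing and avoids any boundary-case analysis; beyond that I do not anticipate a real obstacle.
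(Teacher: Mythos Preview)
Your proposal is correct and follows essentially the same line as the paper's proof: both rest on the identity $S_w(n) = S_u(n_u(n)) + S_v(n_v(n))$ (equivalently, that the average of $w$ is a convex combination of the sub-averages), followed by a pigeonhole step to pass to $\limsup$/$\liminf$. Your treatment is slightly more explicit than the paper's---you spell out the contrapositive for $\negPM$ where the paper just says ``similarly with reversed signs,'' and you note explicitly that infiniteness of $u$ and $v$ forces $n_u(n), n_v(n)\to\infty$---but there is no substantive difference in method.
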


\begin{proof}
Both are clearly tail.
For the submixing property, let us start with $\PM_{+}$.
Let $A=\{a_i\}_{i=1}^\infty$ be a sequence of numbers, and
consider an arbitrary splitting of this sequence into two infinite
subsequences $B=\{b_i\}_{i=1}^\infty$, $C=\{c_i\}_{i=1}^\infty$.
For a fixed $n\geq 1$ denote by $n_b\leq n$ the number of
elements of $B$ among the first $n$ elements of $A$.
Then, assuming $n_b<n$
\[
\frac{\sum_{i=1}^{n}a_i}{n}
=
\frac{\sum_{i=1}^{n_b}b_i + \sum_{i=1}^{n-n_b}c_i}{n}
=
\frac{\sum_{i=1}^{n_b}b_i}{n_b} \cdot \frac{n_b}{n}
+
\frac{\sum_{i=1}^{n-n_b}c_i}{n-n_b} \cdot \left(1-\frac{n_b}{n}\right) \ .
\]
Consequently,
\[
\frac{\sum_{i=1}^{n}a_i}{n}
\leq
\frac{\sum_{i=1}^{n_b}b_i}{n_b}
\qquad
\text{or}
\qquad
\frac{\sum_{i=1}^{n}a_i}{n}
\leq
\frac{\sum_{i=1}^{n-n_b}c_i}{n-n_b} \ ,
\]
and thus there is $x\in\{b,c\}$ such that
\[
\limsup_{n\to\infty}
\frac{\sum_{i=1}^{n}a_i}{n}
\leq
\limsup_{n\to\infty}
\frac{\sum_{i=1}^{n}x_i}{n} \ .
\]
The proof for $\negPM$ proceeds similarly, only with reversed signs.
\qed \end{proof}

\noindent
Now we show that $\PM$ is equivalent to $\PM_+$ for memoryless strategies.
\begin{lemma}\label{lem:MD-lim-def}
Under a memoryless strategy, $\sigma$, for a  MDP, $\G$, with a reward function, $r$,
for almost all runs, $w$:
\[
\liminf_{n\to\infty} \sum_{i=0}^{n-1} r(w(i))/n
=
\limsup_{n\to\infty} \sum_{i=0}^{n-1} r(w(i))/n \ .
\]
\end{lemma}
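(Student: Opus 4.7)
The plan is to reduce the claim to the classical ergodic theorem for finite Markov chains. Fixing the memoryless strategy $\sigma$ in $\G$ yields a finite-state Markov chain $\M_\sigma$ on state space $V$; under $\Prb{\sigma}{s}$, the coordinate process $\{w(i)\}_{i \geq 0}$ is exactly $\M_\sigma$ started at $s$. All probabilistic reasoning below therefore takes place inside an ordinary finite Markov chain.

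First I would invoke the standard fact that in any finite Markov chain, almost every run eventually enters one of the bottom strongly connected components (BSCCs) and from some finite index onward visits every state of that BSCC infinitely often. Let $C$ be a BSCC of $\M_\sigma$ and condition on the event that $w$ enters $C$. Restricted to $C$, the chain is finite and irreducible, so by the ergodic theorem for finite Markov chains (see, e.g., Norris, \emph{Markov Chains}, Theorem~1.10.2), the empirical visit frequencies converge almost surely: for each $v \in C$,
\[
\frac{1}{n} \bigl| \{\, 0 \leq i < n : w(i) = v \,\} \bigr| \;\xrightarrow{\mathrm{a.s.}}\; \mu_C(v),
\]
where $\mu_C$ is the unique stationary distribution on $C$.

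Summing over $v \in C$ weighted by the bounded reward $r(v) \in \{-1,0,1\}$, it follows that the Cesàro average $\frac{1}{n} \sum_{i=0}^{n-1} r(w(i))$, once we restrict attention to the portion of $w$ after entering $C$, converges almost surely to $\sum_{v \in C} \mu_C(v)\, r(v)$. The prefix of $w$ before entering $C$ has almost surely finite length, contributes a bounded amount to $\sum_{i=0}^{n-1} r(w(i))$, and is therefore absorbed when divided by $n$. Hence for almost every run $w$ entering $C$ the sequence of partial averages genuinely converges, and since $\M_\sigma$ has only finitely many BSCCs and almost every run enters one of them, convergence holds almost surely overall, forcing $\liminf = \limsup$ almost surely. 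There is no serious obstacle: the whole argument is a direct appeal to finite Markov chain ergodic theory, with the only care needed being to discard the almost surely finite ``burn-in'' prefix before absorption into a BSCC.
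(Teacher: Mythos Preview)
Your proof is correct and follows essentially the same route as the paper: fix the memoryless strategy to obtain a finite Markov chain, reduce to its BSCCs (using that the property is prefix-independent so that the finite burn-in prefix can be discarded), and invoke the ergodic theorem for finite irreducible chains---you even cite the same reference (Norris, Theorem~1.10.2). The only difference is cosmetic: you spell out the empirical-frequency computation explicitly, whereas the paper simply says ``applying the Ergodic theorem finishes the proof.''
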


\begin{proof}
Fix $\sigma$ to get a Markov chain on the states of $\G$.
Almost all runs visit some bottom strongly connected component (BSCC),
and the above equality
establishes a prefix independent property. We thus safely assume that
$w$ starts in a BSCC, $C$.
On $C$, $\sigma$ induces an irreducible Markov chain, and
applying the Ergodic theorem (see Theorem~1.10.2 from~\cite{Norris98}) finishes the proof.
\qed\end{proof}

\begin{lemma}\label{lem:mean-payoffs}
For every maximizing MDP, there is always a pure and memoryless strategy, $\sigma$,
optimal for $\PM$.
\end{lemma}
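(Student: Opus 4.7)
The plan is to leverage the apparatus already in place: Gimbert's theorem for tail submixing objectives, the submixing property of the $\limsup$ variant $\PM_+$ from Lemma~\ref{lem:PM-sub}, and the almost-sure collapse of $\liminf$ and $\limsup$ along memoryless strategies from Lemma~\ref{lem:MD-lim-def}. In short, I would transfer optimality from $\PM_+$ to $\PM$ by routing through memoryless strategies.

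First, I would apply Theorem~1 of~\cite{Gimbert-STACS07} directly to the objective $\PM_+$. This is justified because Lemma~\ref{lem:PM-sub} shows $\PM_+$ is both tail and submixing, so in any maximizing MDP player Max has a pure memoryless optimal strategy $\sigma$ for $\PM_+$, that is, $\PrA{\sigma}{s}{\PM_+} = \Val^{\PM_+}(s)$ for every state $s$.

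Second, I would use $\sigma$ as the candidate optimal strategy for $\PM$ and argue this works. Since $\PM \subseteq \PM_+$ (because $\liminf \leq \limsup$), one direction is immediate: $\Val^{\PM}(s) \leq \Val^{\PM_+}(s)$ and hence $\PrA{\sigma}{s}{\PM} \leq \Val^{\PM}(s)$ trivially from the definition of value. For the reverse inequality, because $\sigma$ is memoryless, Lemma~\ref{lem:MD-lim-def} gives $\PrA{\sigma}{s}{\PM} = \PrA{\sigma}{s}{\PM_+}$; combining with the first step yields $\PrA{\sigma}{s}{\PM} = \Val^{\PM_+}(s) \geq \Val^{\PM}(s)$. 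Hence $\PrA{\sigma}{s}{\PM} = \Val^{\PM}(s)$ in every $s$, so $\sigma$ is a pure memoryless optimal strategy for $\PM$ as required.

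There is no substantive obstacle; the content sits entirely in the earlier lemmas. The only point that deserves a moment of care is making sure Lemma~\ref{lem:MD-lim-def} is applied correctly: it requires $\sigma$ to be memoryless (so that fixing $\sigma$ yields a genuine finite Markov chain on $V$ to which the Ergodic theorem applies), which is exactly what Gimbert's theorem hands us in the first step. No optimality for Min is needed here since we are working with a maximizing MDP, so the argument does not invoke Fact~\ref{fact:mdp-games}; that reduction will be used separately to lift this lemma to the SSG setting inside the proof of Proposition~\ref{prop:long-run-MD-det}.
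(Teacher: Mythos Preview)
Your argument is correct and matches the paper's proof essentially line for line: choose a pure memoryless $\sigma$ optimal for $\PM_+$ via Gimbert's submixing theorem (Lemma~\ref{lem:PM-sub}), then use Lemma~\ref{lem:MD-lim-def} to identify $\PrA{\sigma}{s}{\PM}=\PrA{\sigma}{s}{\PM_+}=\Val^{\PM_+}(s)\geq\Val^{\PM}(s)$, which together with the trivial bound $\PrA{\sigma}{s}{\PM}\leq\Val^{\PM}(s)$ gives optimality. The only cosmetic nit is the connective ``hence'' before $\PrA{\sigma}{s}{\PM}\leq\Val^{\PM}(s)$, which does not actually follow from $\Val^{\PM}(s)\leq\Val^{\PM_+}(s)$ but, as you yourself note, from the definition of value; the logic is fine.
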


\begin{proof}
Choose $\sigma$ to be optimal for $\PM_+$. This is possible, because
$\PM_+$ is a submixing and tail objective.
Observe that since $\PM\subseteq\PM_{+}$, we have
$\Val^{\PM}(s)\leq\Val^{\PM_{+}}(s)$ for all states $s$.
Finally, due to Lemma~\ref{lem:MD-lim-def}, for all states $s$:
\[
\Val^{\PM_{+}}(s)
=
\PrA\sigma{s}{\PM_{+}}
=
\PrA\sigma{s}{\PM}
\leq
\Val^{\PM}(s) \ .
\]
\qed\end{proof}

One may be tempted to believe that all of the objectives we study
are submixing. This is, however, not true for $\DI$ and $\negCN$,
where we have to employ other methods for proving the existence
of pure and memoryless optimal strategies.

\begin{lemma}
The objectives $\DI$ and $\negCN$ are not submixing.
\end{lemma}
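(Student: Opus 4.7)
The plan is to refute submixing for each objective by exhibiting, in a very simple SSG, a concrete triple of runs violating the submixing implication. A minimal ambient graph suffices: take a one-player MDP on two states $s_+, s_-$ with rewards $+1, -1$ and all four transitions present, so that every sequence in $\{-1,+1\}^{\omega}$ is a legal run. For each objective $O \in \{\DI, \negCN\}$ I will produce reward sequences $u, v$ and an interleaving $w = u_1 v_1 u_2 v_2 \cdots$ (with $u = u_1 u_2 \cdots$ and $v = v_1 v_2 \cdots$) such that $w \in O$ but $u, v \notin O$.

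For $\DI$, I would take $u = v$ given by the pattern $(+1)^1(-1)^1(+1)^2(-1)^2(+1)^3(-1)^3\cdots$, whose partial sums climb to an ever higher $k$-th peak of height $k$ and then return to $0$ between peaks, so $\liminf = 0$ and $u \notin \DI$. I cut $u$ into chunks $u_1 = (+1)^1$ and $u_k = (-1)^{k-1}(+1)^k$ for $k \geq 2$, so that the end of $u_1 \cdots u_k$ sits at $u$'s $k$-th peak $+k$ and the running $u$-sum inside $u_k$ stays in $[0,k]$; $v$ is chunked identically. During $u_k$ the frozen $v$-contribution is $k-1$ (the end of $v_{k-1}$), so $w$'s running partial sum stays in $[k-1, 2k-1]$, and symmetrically during $v_k$ it stays in $[k, 2k]$. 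Hence $\liminf$ of $w$'s partial sums equals $+\infty$ and $w \in \DI$, refuting submixing of $\DI$.

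For $\negCN$, the key change is that $u$ and $v$ must also visit arbitrarily deep troughs. I take $u = v$ to alternate upward and downward excursions: the $k$-th upward excursion climbs from $0$ to $+k$ and back to $0$, and the $k$-th downward excursion drops from $0$ to $-k$ and back to $0$. Then partial sums visit every $-k$, so $\liminf = -\infty$ and $u \notin \negCN$. I chunk each sequence so that $u_k$ (and $v_k$) ends precisely at the $k$-th positive peak $+k$. Inside $u_k$ the running $u$-value dips to at most $-(k-1)$ while $v$ is frozen at $+(k-1)$, so $w$'s partial sum during $u_k$ is $\geq 0$; symmetrically during $v_k$. Therefore $w$'s partial sums are bounded below (in fact $\geq 0$) and $w \in \negCN$, refuting submixing of $\negCN$.

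The main technical point, and the part I expect to need the most care, is controlling $w$'s partial sums \emph{at every prefix}, not merely at chunk boundaries. The trick in both constructions is to place chunk endpoints exactly at the partial-sum peaks of the currently active sequence: at the moment each chunk starts, the other sequence is frozen at its own most recent peak, which is precisely large enough to cancel the deepest intermediate excursion the active sequence will make inside that chunk. Without this alignment, the interior of a chunk can dip far below $0$ and ruin both constructions.
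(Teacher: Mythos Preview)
Your proposal is correct. Both counterexamples work as claimed: for $\DI$, with $u=v$ having $k$-th partial-sum peak at height $k$ and returning to $0$ between peaks, chunking at the peaks makes the frozen contribution of the inactive sequence exactly large enough ($k-1$ during $u_k$) to keep the interleaved partial sums in $[k-1,2k]$, so $w\in\DI$ while $u,v\notin\DI$. For $\negCN$, inserting a depth-$(k{-}1)$ downward excursion between consecutive peaks makes $u,v$ have $\liminf=-\infty$, yet the same peak-aligned chunking keeps the interleaved partial sums nonnegative, so $w\in\negCN$.

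The paper's argument proceeds in the opposite direction: it fixes the interleaved sequence $w$ first (for $\negCN$, the concatenation of blocks $A_k=(+1)^k(-1)^k$ with $\liminf=0$; for $\DI$, the periodic sequence $+1,+1,-1,\ldots$ with $\liminf=+\infty$) and then \emph{splits} $w$ into $u$ and $v$ by a greedy inductive rule that sends each new symbol to whichever subsequence currently needs it to drive its $\liminf$ down. Your construction instead fixes $u=v$ explicitly and \emph{assembles} $w$ by a deterministic chunking. The advantage of your approach is transparency: the bound on every prefix of $w$ follows from a one-line arithmetic check (frozen peak $+$ deepest dip $\geq 0$), whereas the paper's greedy rule requires a separate inductive verification that both subsequences actually achieve $\liminf=-\infty$ (respectively, revisit $0$ infinitely often). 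The paper's approach, on the other hand, starts from the most natural candidate for $w$ and shows that \emph{some} split works, which is closer in spirit to showing that the submixing implication can fail generically rather than for one carefully engineered interleaving.
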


\begin{proof}
Consider the following finite sequences $A_k$ over $\{\pm1\}$, parametrized
by $k\geq 1$, and defined inductively by
$A_1\coloneqq +1, -1$, and $A_{k+1}\coloneqq +1, A_k, -1$.
We build an infinite sequence $A=\{a_i\}_{i=1}^\infty$ by concatenating
them, $A\coloneqq A_1, A_2, A_3, \ldots$.
Obviously $\liminf \sum_{i=1}^na_i=0$. %
Now we define two particular subsequences of $A$, denoted by
$B\coloneqq\{b_i\}_{i=1}^\infty$, $C\coloneqq\{c_i\}_{i=1}^\infty$,
so that
\begin{equation}\label{eq:nosub}
\liminf_{n\to\infty}\sum_{i=1}^n b_i=\liminf_{n\to\infty}\sum_{i=1}^n c_i=-\infty \ .
\end{equation}
We do it inductively by saying for every $k\geq 1$, whether the $k$-th
element, $a_k$, of $A$ belongs to $B$, or $C$.
Assume we have already decided for each of the first $k$
elements of $A$ whether it belongs to $B$ or $C$, so that
we have already defined the finite prefixes $b_1,\ldots,b_M$ of $B$, and $c_1,\ldots,c_N$
of $C$.
Set $s_B^i\coloneqq \sum_{j=1}^ib_j$, and similarly
$s_C^i\coloneqq \sum_{j=1}^ic_j$.
If
either $a_{k+1}=-1$ and $\min_{i=1}^{M}s_B^i \geq \min_{i=1}^{N}s_C^i$,
or $a_{k+1}=1$ and $\min_{i=1}^{M}s_B^i < \min_{i=1}^{N}s_C^i$,
then $a_{k+1}$ belongs to $B$, otherwise it belongs to $C$.
It is easy to verify that for every number $m$ we have
some $n$ such that $s_B^n<m$, and
some $n'$ such that $s_C^{n'}<m$.
(In fact, this is the idea behind the construction -- the sequences
$B$ and $C$ take turns in achieving lower and lower
partial sums.)
Thus (\ref{eq:nosub}) is true.
As the sequence $A$ can be easily obtained as a sequence of rewards
associated to a run of a very simple MDP with rewards,
this proves that $\negCN$ is not submixing.

Similarly goes the proof that $\DI$ is not submixing.
Along the lines of the previous proof, just consider the following
modifications:
Take the sequence $A=\{a_i\}_{i=1}^{\infty}$ to be defined by
$a_i=-1$ iff $i \equiv 0 \pmod 3$ and
$a_i=+1$ otherwise.
Further, in the inductive process of building the sequences $B$ and $C$,
denote by $z_B\coloneqq |\{i\leq M \mid s_B^i=0\}|$, 
and by $z_C\coloneqq |\{i\leq N \mid s_C^i=0\}|$.
Finally, apply the rule of assigning $a_{k+1}$ to $B$ iff
either $a_{k+1}=-1$, $z_C\geq z_B$, and $s_B^M>0$, or
$a_{k+1}=+1$ and $z_C<z_B$ or $s_B^M=0$.
(Here the intuition is that $B$ and $C$ take turns in revisiting
$0$ from above.)
It is easy to show that for every $m\geq 0$ there is some
$n\geq m$ such that $s_B^n=0$, and some $n'\geq m$ such that $s_C^{n'}=0$.
This shows that $\liminf B=\liminf C=0$, while $\liminf A=\infty$.
As a consequence, $\DI$ is not submixing.
\qed\end{proof}

\subsection{Proof of Proposition~\ref{prop:long-run-MD-det}, objective $\DI$}

First we set up a tool to analyze finite-state Markov chains with respect to
the objective $\DI$.
Consider a finite-state Markov chain, $\M$, with the underlying transition
graph $(S,\ctran{})$, and with a reward function, $r:S\to\{-1,0,+1\}$.
Assume, moreover, that $\M$ is irreducible.
Also assume that some initial state, $s$, is fixed.
We derive here one condition sufficient for $\PrA{}{s}{\DI}=1$,
and another one sufficient for $\PrA{}{s}{\DI}=0$ in $\M$.
The conditions are parametrized by a choice of a subset $R\subseteq S$ of the states
of $\M$.
To formulate them we need the following random variables.
\begin{itemize}
\item $V_k^t$, $k\geq0$, $t\in R$ returns the time of the $k$-th visit (thus ``V'') to $t$.
\item $G_k^t$, $k\geq0$, $t\in R$ is the reward gained (``G'') between
time $V_k^t$ (inclusive) and the next visit to $R$ (exclusive).
\end{itemize}
By standard facts from probability theory, almost all runs in $\M$
visit all states infinitely often. Thus these random variables are almost
surely defined.
For a fixed $t\in R$, all the variables $G_k^t$ are i.i.d., and,
as the expected time to visit $R$ from $t$ is finite, their common
mean, $\mu_t$, is well defined and finite.
Observe also that the values $\mu_t$ do not depend on the choice
of the initial state.

\begin{lemma}\label{lem:MC-classify}
For every finite-state irreducible Markov chain, $\M$, and
every subset, $R$, of states, and every $t\in R$,
considering the numbers $\mu_s$, $s\in R$,
derived as above, the following is true:
\begin{itemize}
\item If $\mu_s>0$ for all $s\in R$ then $\PrA{}{t}{\DI}=1$.
\item If $\mu_s\leq0$ for all $s\in R$ then $\PrA{}{t}{\DI}=0$.
\end{itemize}
\end{lemma}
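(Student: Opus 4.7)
The plan is to reduce the claim to standard one-dimensional random walk theory. Fix $t \in R$ and consider the walk $S_n \coloneqq \sum_{k=1}^n Y_k$ on $\mathbb{Z}$, whose increments
\[
Y_k \;\coloneqq\; \sum_{i = V_k^t}^{V_{k+1}^t - 1} r(w(i))
\]
record the total reward accumulated over the $k$-th excursion of $\M$ away from $t$. By the strong Markov property applied at the return times $V_k^t$, the sequence $(Y_k)_{k \geq 1}$ is i.i.d. Because in a finite irreducible Markov chain the expected return time to $t$ is finite, and rewards lie in $\{-1,0,1\}$, both $\mathbb{E}[|Y_1|]$ and $\mathrm{Var}(Y_1)$ are finite.

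The next step is to express $\mathbb{E}[Y_1]$ in terms of the $\mu_s$. I would decompose each excursion $[V_k^t, V_{k+1}^t)$ into the maximal contiguous segments separated by visits to $R$. Each such segment starts at some state $s \in R$ and contributes reward distributed as $G_1^s$, of mean $\mu_s$. Writing $N_s$ for the number of visits to $s$ during one excursion from $t$, linearity of expectation and the strong Markov property give
\[
\mathbb{E}[Y_1] \;=\; \sum_{s \in R} \mathbb{E}[N_s]\,\mu_s,
\]
and irreducibility forces $0 < \mathbb{E}[N_s] < \infty$ for every $s \in R$ (with $\mathbb{E}[N_t] \geq 1$). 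Hence $\mu_s > 0$ for all $s$ yields $\mathbb{E}[Y_1] > 0$, and $\mu_s \leq 0$ for all $s$ yields $\mathbb{E}[Y_1] \leq 0$.

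In the positive case, SLLN gives $S_n \to +\infty$ almost surely, i.e., the accumulated reward diverges along the subsequence of return times to $t$. To upgrade this to $\liminf = +\infty$ over \emph{all} times, I would control the within-cycle minima: the accumulated reward at any time in $[V_k^t, V_{k+1}^t)$ is bounded below by $S_{k-1} + M_k$, where $M_k$ is the minimum of the rewards' partial sums inside the $k$-th excursion. The $M_k$ are i.i.d.\ with $|M_k|$ bounded by the excursion length, so $\mathbb{E}[|M_1|] < \infty$. A Borel--Cantelli argument based on the summability $\sum_k \mathbb{P}(|M_k| > \varepsilon k) \leq \varepsilon^{-1}\,\mathbb{E}[|M_1|] < \infty$ (tail-sum formula for $|M_1|$) gives $M_k/k \to 0$ almost surely, hence $S_{k-1} + M_k \to +\infty$ almost surely, establishing $\PrA{}{t}{\DI} = 1$.

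In the non-positive case, if $\mathbb{E}[Y_1] < 0$ then SLLN immediately gives $S_n \to -\infty$, so $\PrA{}{t}{\DI} = 0$. If $\mathbb{E}[Y_1] = 0$ and $Y_1$ is not identically zero, then $(S_n)$ is a nondegenerate zero-mean integer-valued random walk with finite variance, hence recurrent, and $\liminf S_n = -\infty$ a.s.\ by classical one-dimensional theory (see~\cite[Chapter~8]{Chung01} or the Chung--Fuchs theorem); and if $Y_1 \equiv 0$ then $S_n \equiv 0$, so $\liminf = 0 < +\infty$. Either way $\PrA{}{t}{\DI} = 0$. The only delicate step is the positive case: the SLLN statement for the $Y_k$ must be combined with control over within-cycle fluctuations, and integrability of the return time is precisely what makes this controllable.
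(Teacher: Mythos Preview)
Your proposal is correct and follows the same overall route as the paper: reduce to the i.i.d.\ random walk $S_n=\sum_{k\le n}Y_k$ of excursion increments from $t$, express the drift $\mathbb{E}[Y_1]$ as a non-negative linear combination $\sum_{s\in R}\mathbb{E}[N_s]\mu_s$ to inherit the sign from the $\mu_s$, and then invoke one-dimensional random walk theory. The non-positive case is handled essentially identically (the paper cites \cite[Theorem~8.3.4]{Chung01} in one stroke where you split into the three subcases, but the content is the same).

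The genuine difference is in the positive-drift case. The paper works with strict ascending ladder epochs: it shows the first ladder time $M$ has finite mean, deduces (via a Wald-type computation) that the \emph{real} time $M'=V_M$ to the first ladder epoch has finite mean, and then uses Borel--Cantelli on the events ``the accumulated reward touches $0$ during the $k$-th ladder segment'', bounding their probability by $\Pr(M'\ge k)$. You instead bound the within-excursion minima $M_k$ directly: they are i.i.d.\ with $|M_k|$ dominated by the excursion length, so $\mathbb{E}|M_1|<\infty$, whence $M_k/k\to 0$ a.s.\ by the standard tail-sum/Borel--Cantelli argument; combined with $S_{k-1}/k\to\mathbb{E}[Y_1]>0$ from SLLN this gives $S_{k-1}+M_k\to+\infty$. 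Your argument is more elementary in that it avoids ladder-epoch theory and the citation for $\mathbb{E}[M]<\infty$, relying only on SLLN and integrability of the return time; the paper's argument, on the other hand, yields a slightly sharper intermediate statement (finiteness of $\mathbb{E}[M']$) that it reuses later in the proof of Lemma~\ref{lem:PM-DI-scc}.
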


\begin{proof}
We use the following random variables on runs from $\run_\M(t)$:
\begin{itemize}

\item $V_k$, $k\geq1$, the time of the $k$-th visit to $t$.
(Note: $V_1\equiv 0$.)

\item 
$A_k$, $k\geq1$,
the reward accumulated (``A'') between time $V_k$ (inclusive)
and $V_{k+1}$ (exclusive).

\item $S_m\coloneqq\sum_{k=1}^m A_k$, $m\geq0$. (``S'' for ``sum''. Note:
$S_0\equiv 0$.)

\end{itemize}
Since $\M$ is a Markov chain,
we get that the variables $A_k$ are i.i.d., in particular
there is some $\mu$ such that $\mu=\Ex{}{t}{A_k}$
for all $k\geq 1$.
\footnote{By $\ExSign$ we denote the expectation.}

\begin{claim}
If all $\mu_s>0$ then $\mu>0$.
If all $\mu_s\leq0$ then $\mu\leq0$.
\end{claim}

\begin{proof}
For every $s\in R$ and $w\in\run_\M(t)$ denote by $v_s(w)$
the number of visits to $s$ before the first revisit to $t$:
\(
v_s(w)=card\left(
\{k\geq 0 \mid w(k)=s \land \forall l<k: w(l)=t \implies l=0\}
\right) %
\).
Then, writing $R=\{t_1,\ldots,t_\ell\}$,
\[
\mu =
\sum_{c_1,\ldots,c_\ell\geq0}
\PrA{}{t}{\bigwedge_{j=1}^\ell v_{t_j}=c_j}
\cdot
\sum_{j=1}^\ell c_j\cdot\mu_{t_j} \ .
\]
Since all the coefficients of $\mu_s$, $s\in R$
are non-negative, the claim is proved.
\qed
\end{proof}

For $\mu\leq 0$ standard results on random walks
(see~\cite[Theorem~8.3.4]{Chung01}) yield
$\liminf_{n\to \infty} S_n<\infty$
almost surely.
Immediately,
$\liminf_{n\to \infty}\sum_{i=0}^nr(w(i))<\infty$
almost surely, thus $\PrA{}{t}{\DI}=0$.

\noindent
The case when $\mu>0$ is more subtle, and we need to introduce two more random variables:
\begin{itemize}
\item
$M$, the least $m$ such that $S_m>0$. (``M'' for ``maximum''.)

\item
$M'$, $M'\coloneqq V_M$ (the actual number of steps to $M$).

\end{itemize}

\begin{claim} {\bf (cf.\ \cite[Theorem~8.4.4]{Chung01})}
$\Ex{}{t}{M}<\infty$.
\end{claim}

\begin{claim}
\(
\Ex{}{t}{V_{k+1}-V_k}=\Ex{}{t}{V_2-V_1}<\infty
\)
for all $k\geq1$.
\end{claim}

\begin{proof}
Since $\M$ is a Markov chain, we get the equality.
By standard results on Markov chains
(see~\cite[Theorem~1.7.7]{Norris98})
we obtain 
$\Ex{}{t}{V_2-V_1}=\Ex{}{t}{V_2}=(\pi(t))^{-1}$ where $\pi$ is 
an invariant (and positive)
distribution over the states of $\M$.
Thus the inequality follows.
\qed
\end{proof}

\begin{claim}\label{claim:Mprime-fin}
$\Ex{}{t}{M'}<\infty$.
\end{claim}
\begin{proof}

\begin{align*}
\Ex{}{t}{M'}
&=\sum_{m=1}^\infty \PrA{}{t}{M=m}\cdot \Ex{}{t}{V_m} \\
&=\sum_{m=1}^\infty \PrA{}{t}{M=m}\cdot \Ex{}{t}{(V_m-V_{m-1})+(V_{m-1}-V_{m-2})+\cdots+(V_2-V_1)} \\
&=\sum_{m=1}^\infty \PrA{}{t}{M=m}\cdot (m-1)\cdot \Ex{}{t}{V_2-V_1} \\
&=(\Ex{}{t}{M}-1)\cdot \Ex{}{t}{V_2-V_1}
\end{align*}
\qed
\end{proof}

As a generalization of the variable $M$,
we define, inductively and for almost all runs from $\run_\M(t)$,
yet another sequence $M_k$, $k\geq0$ of random variables by setting
$M_0\equiv 0$, and
$M_{k+1}$ to be the least $m$ such that $S_m>S_{M_k}$.
(We get $M=M_1$.)
In other words, $M_k$ are the times when maximal rewards were achieved
on revisit to $t$.
We also define a sequence of events, $Z_k$, $k\geq 1$:
A run $w\in \run_\M(t)$ is in $Z_k$ iff there is some
$j$, $V_{M_k}\leq j < V_{M_{k+1}}$ such that
the reward accumulated on $w(0)\cdots w(j)$ is $0$.
(``Z'' for ``zero''.)

\begin{claim}
$\sum_{k\geq1}^\infty \PrA{}{t}{Z_k}<\infty$.
\end{claim}
\begin{proof}
It takes at least $S_{M_k}\geq k$ steps to gain
reward $0$ starting at time $V_{M_k}$.
Since $V_{M_{k+1}}-V_{M_k}$ has the same distribution as $M'$,
we get
\(
\PrA{}{t}{Z_k}
\leq\PrA{}{t}{M'\geq k}
\).
Now
\[
\sum_{k\geq1}^\infty \PrA{}{t}{Z_k}
\leq \sum_{k\geq1}^\infty \PrA{}{t}{M'\geq k}
= \sum_{k\geq1}^\infty \sum_{l\geq k}^\infty \PrA{}{t}{M'= k}
= \sum_{k\geq1}^\infty k \cdot \PrA{}{t}{M'= k}
= \Ex{}{t}{M'} < \infty \ .
\]
\qed
\end{proof}

Thus
by the Borel-Cantelli lemma, the probability that $Z_k$ occurs for
infinitely many $k$ is $0$.
Consequently $\liminf_{n\to\infty} \sum_{i=0}^nr(w(i)) > 0$ for almost all $w$.
Similarly we can prove for all $h>0$ that
$\liminf_{n\to\infty} \sum_{i=0}^nr(w(i)) > h$ for almost all $w$.
Hence,
$\liminf_{n\to\infty} \sum_{i=0}^nr(w(i)) = \infty$ almost surely,
because a countable intersection of sets of probability $1$ has probability $1$.
Thus
$\PrA{}{t}{\DI}=1$.
\qed\end{proof}

\begin{lemma}\label{lem:fin-opt}
The finite-memory strategy $\sigma$ from the proof of
Proposition~\ref{prop:long-run-MD-det} is optimal for $\DI$.
\end{lemma}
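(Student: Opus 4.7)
The plan is to fix $\sigma$ on $\G$ and analyze the resulting finite-state Markov chain $\M$, whose state space is the product of $V$ with $\sigma$'s (bounded) memory. By Fact~\ref{fact:md-qual} we may assume $\Val^{\DI}(u) = 1$ for every $u \in V$, so it suffices to prove $\PrA{\sigma}{s}{\DI} = 1$ for an arbitrary initial state $s$. Call a product state a \emph{reset state} if $\sigma$'s memory is empty there, and let $R$ denote the set of all reset states in $\M$.

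The first step is to establish a regeneration estimate. For every reset state $t = (v, \mathit{reset}) \in R$, the segment of a run between the visit to $t$ and the next visit to $R$ is, by the very definition of $\sigma$, distributed as a run of $\tau$ started afresh in $v$ and stopped at the stopping time $T_v$. Consequently the expected reward accumulated on this segment equals the expected reward of $\tau$ at $T_v$ starting in $v$, which is at least $m/4 > 0$ by the construction; moreover the segment has length at most $n$, so the expected return time to $R$ is finite. By the strong Markov property, the rewards $G_k^t$ from Lemma~\ref{lem:MC-classify} form an i.i.d.\ sequence with common mean $\mu_t \geq m/4 > 0$.

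Next I would argue BSCC by BSCC. Pick any BSCC $B$ of $\M$ reachable from $(s, \mathit{reset})$. Since $\sigma$ resets its memory at least once every $n+1$ steps, reset states are visited infinitely often along almost every run, so every reachable BSCC satisfies $B \cap R \neq \emptyset$. The chain $\M$ restricted to $B$ is irreducible, and the previous step shows $\mu_t > 0$ for every $t \in R \cap B$. Lemma~\ref{lem:MC-classify} (applied to the restricted chain with $R \cap B$ in the role of $R$) then yields $\PrA{}{b}{\DI} = 1$ for every $b \in B$. Since almost every run in $\M$ is eventually absorbed in some BSCC, summing over reachable BSCCs gives $\PrA{\sigma}{s}{\DI} = 1 = \Val^{\DI}(s)$, so $\sigma$ is optimal.

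The main obstacle I foresee is the regeneration step: to legitimately invoke Lemma~\ref{lem:MC-classify} one has to verify that the rewards accumulated between successive visits to a reset state really do form an i.i.d.\ sequence with the claimed mean $\mu_t \geq m/4$. This follows from the strong Markov property together with the fact that $\sigma$ restarts fully independently of the past after each reset, but it should be spelled out explicitly before the random-walk lemma is applied. Once that is in place, the rest of the argument is a routine BSCC decomposition plus a direct application of Lemma~\ref{lem:MC-classify}.
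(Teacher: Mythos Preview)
Your proposal is correct and follows essentially the same route as the paper: fix $\sigma$, pass to the product Markov chain, observe that every BSCC contains reset states, and apply Lemma~\ref{lem:MC-classify} with $R$ the set of reset states inside the BSCC, using that each $\mu_t$ equals the expected reward at the stopping time $T_v$ and is therefore positive. The only minor imprecision is that the parameters $m$ and $n$ in the construction of $\sigma$ are chosen per starting state $v$, so strictly speaking one has $\mu_t \geq m_v/4$ and segment length at most $n_v$; since $V$ is finite this does not affect the argument.
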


\begin{proof}
Observe that fixing $\sigma$ yields a finite-state
Markov chain, $\G(\sigma)$, on the parallel composition of $\G$ and
the finite automaton used for updating the memory of $\sigma$.
Let us fix an arbitrary bottom strongly connected component (BSCC), $C$,
of $\G(\sigma)$, and denote by $R$ the states of $C$ in which the memory
of $\sigma$ is being reset.
We are now going to analyze, using Lemma~\ref{lem:MC-classify},
the irreducible MC, $\M$, induced
by restricting $\G(\sigma)$ to $C$.
Fix an arbitrary $s\in R$.
Recall, that the variable $G_k^u$, defined before stating Lemma~\ref{lem:MC-classify},
returns the reward accumulated between the $k$-th visit to $s$ and
the next visit to $R$.
It is easy to verify that the common mean, $\mu_u$, of $G_k^u$
is equal to the mean of the stopping time $T_s$ introduced
in the main text of the proof, ant thus positive.
Therefore Lemma~\ref{lem:MC-classify} guarantees that
for every state $s\in R$ lying in some BSCC we have
$\PrA{}{s}{\DI}=1$.
Since $\G(\sigma)$ is finite, almost every run in it reaches
some BSCC and every state in it.
Because $\DI$ is a tail objective we get
$\PrA{}{s}{\DI}=1$
for every state $s$.
\qed\end{proof}

\begin{lemma}\label{lem:DI-MD}
In a maximizing MDP, $\G$, with value $1$ in all states,
given a pure finite-memory strategy $\sigma$ optimal for $\DI$,
a pure and memoryless optimal strategy $\tau$ can be constructed.
\end{lemma}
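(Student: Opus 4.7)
The plan is to reduce the problem to the objective $\PM$, for which pure memoryless optimal strategies have already been established in the preceding part of Proposition~\ref{prop:long-run-MD-det} via Gimbert's submixing theorem (applied to the $\limsup$-variant $\PM_{+}$).

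First I would record the set inclusion $\PM \subseteq \DI$: if $\liminf_{n} \sum_{i=0}^{n-1} r(w(i))/n > 0$, then for some $c>0$ one has $\sum_{i=0}^{n} r(w(i)) \geq c n$ for all sufficiently large $n$, and hence $\liminf_{n}\sum_{i=0}^{n} r(w(i)) = +\infty$. Consequently, any strategy achieving $\PM$ almost surely also achieves $\DI$ almost surely.

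The key intermediate step is to show that the given pure finite-memory strategy $\sigma$ already achieves $\PM$ with probability $1$ from every state. Fixing $\sigma$ yields a finite-state Markov chain $\G(\sigma)$ on (state, memory) pairs, and almost every run of $\G(\sigma)$ reaches some BSCC $B$. By the ergodic theorem (Theorem~1.10.2 of~\cite{Norris98}, as already invoked in Lemma~\ref{lem:MD-lim-def}), inside $B$ the running average of the rewards converges almost surely to a single constant $\bar{r}_{B}$. I would argue $\bar{r}_{B} > 0$ by contradiction: if $\bar{r}_{B} \leq 0$, pick any $t \in B$ and apply Lemma~\ref{lem:MC-classify} with $R = \{t\}$; the mean $\mu_{t}$ of the i.i.d.\ reward increments between successive returns to $t$ equals $\bar{r}_{B}/\pi_{B}(t)$ (by the standard identity between stationary probabilities and expected return times in an irreducible finite Markov chain), so $\mu_{t} \leq 0$, yielding $\PrA{}{t}{\DI} = 0$ and contradicting optimality of $\sigma$ for $\DI$. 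Hence $\bar{r}_{B} > 0$ in every BSCC, and $\PM$ holds almost surely under $\sigma$ from every initial state.

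Therefore $\G$ has value $1$ for $\PM$ at every state. Applying the already-established part of Proposition~\ref{prop:long-run-MD-det} for $\PM$ gives a pure memoryless strategy $\tau$ with $\PrA\tau{v}{\PM} = 1$ for every state $v$, and by the inclusion $\PM \subseteq \DI$ noted above, this $\tau$ is also optimal for $\DI$, completing the construction. The only delicate point in the plan is the passage from almost-sure divergence of the accumulated rewards in a BSCC to strict positivity of the mean payoff, and this is exactly the content of the random-walk analysis already carried out for Lemma~\ref{lem:MC-classify}.
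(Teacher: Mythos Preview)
Your argument is correct, and it takes a genuinely different route from the paper's own proof of Lemma~\ref{lem:DI-MD}.

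The paper proceeds by a direct memory-elimination argument: it forms the product Markov chain $\G(\sigma)$, identifies pairs $(s,q),(s,p)$ of ``ambiguous'' memory states projecting to the same MDP state, and shows that one can always redirect all transitions entering such a set to a single representative $(s,q)$ while preserving almost-sure $\DI$. The choice of representative is justified case-by-case via the BSCC structure and Lemma~\ref{lem:MC-classify}; iterating reduces the number of ambiguous states to zero.

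Your approach instead leverages the already-established $\PM$ case of Proposition~\ref{prop:long-run-MD-det}. You use the given finite-memory $\sigma$ only to certify that $\Val^{\PM}(v)=1$ everywhere: in each reachable BSCC of $\G(\sigma)$ the ergodic average $\bar r_B$ must be strictly positive, for otherwise Lemma~\ref{lem:MC-classify} with $R=\{t\}$ (using $\mu_t=\bar r_B/\pi_B(t)$) would give $\PrA{}{t}{\DI}=0$, contradicting the optimality of $\sigma$. Once $\Val^{\PM}\equiv 1$ is known, the pure memoryless optimum for $\PM$ (obtained via Gimbert's submixing theorem applied to $\PM_+$, together with Lemma~\ref{lem:MD-lim-def}) is automatically optimal for $\DI$ by the inclusion $\PM\subseteq\DI$. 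In effect you are anticipating the content of Lemma~\ref{lem:PM-DI-eq}/Lemma~\ref{lem:PM-DI-scc} and applying it to the product chain $\G(\sigma)$ rather than to a memoryless strategy.

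What each approach buys: your reduction is shorter and more conceptual, and it makes transparent that the $\DI$ case adds nothing once $\PM$ is settled and a finite-memory witness for $\DI$ is in hand. The paper's argument is more self-contained (it does not rely on the $\PM$ result or on Gimbert's theorem for $\PM_+$), and it keeps the constructed $\tau$ explicitly tied to $\sigma$, which is closer to the ``can be constructed'' phrasing of the lemma; your $\tau$ comes from an existence result and bears no a priori relation to the given $\sigma$.
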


\begin{proof}
As in the proof of Lemma~\ref{lem:fin-opt},
given a finite-memory strategy, $\varrho$, we denote
by $\G(\varrho)$ the finite-state Markov chain, states
of which are
pairs $(s,q)$ where $s$ is a state of $\G$, and
$q$ is a state of the finite automaton representing the memory of $\varrho$.
Probabilities are obtained in the natural way from $\varrho$ and $\G$.
Consider now the Markov chain $G(\sigma)$.
The initial state is $(s_0, q_0)$ where $s_0$, $q_0$ are initial
states of $\G$, and the automaton for $\sigma$, respectively.
For technical reasons we assume that for each $q$ there is at most
one $s$ so that $(s,q)$ is reachable from $(s_0,q_0)$.

If there are two states, $q\neq p$, of the automaton for $\sigma$,
and a state $s$ of $\G$
such that both $(s,q)$ and $(s,p)$ are reachable from $(s_0,q_0)$,
we call both $q$ and $p$ \emph{ambiguous}.
If there is no ambiguous state, $\sigma$ is already memoryless.
If there are ambiguous states, we show how to
modify $\sigma$ to get another pure and finite-memory optimal strategy $\sigma'$,
such that the associated Markov chain, $\G(\sigma')$, has
fewer ambiguous states.
As there are only finitely many ambiguous states in the beginning,
repeating this process inevitably leads to the optimal pure and
memoryless strategy $\tau$.

We thus assume that there is a state $s$ of $\G$ such that
$A\coloneqq\{(s,q) \mid (s,q)\text{ is reachable from }(s_0, q_0)\}$
has at least two elements.
For every fixed choice of $(s,q)\in A$ we now define a new finite-memory strategy
$\sigma_q$.
This is derived by modifying the finite automaton
for $\sigma$ so that all transitions leading to some $p$, where
$(s,p)\in A$, are redirected to $q$.
From this, due to our technical assumption,
already follows that $\sigma'$ has fewer ambiguous states.
It remains to prove that there is some $q$
such that $\sigma_q$ is optimal.

There are two cases to consider.
First, consider the situation where there is $(s,q)\in A$
such that with some positive probability states from $A\smallsetminus \{(s,q)\}$
are visited only finitely often in $\G(\sigma)$.
This implies that there is a BSCC, $S$, of $\G(\sigma)$, such that $|S\cap A|\leq1$.
We choose $(s,q)$ so that it minimizes the distance
(in the transition graph of $\G(\sigma)$)
to $S$ among the states from $A$.
This implies that, starting in $(s,q)$, states
from $A\smallsetminus \{(s,q)\}$ are avoided with some positive probability, $\delta$.
We now prove that $\sigma_q$ is optimal.
Indeed, let $\neg A$ be the event of not visiting $A$, and
let $E$ be an arbitrary event.
Then $\PrA{\sigma}{(s_0,q_0)}{E \mid \neg A}=\PrA{\sigma_q}{(s_0,q_0)}{E \mid \neg A}$.
On the other hand, every run in $\G(\sigma)$ visiting $A$ projects to $\G(\sigma_q)$,
as a run $w$ visiting $(s,q)$.
Here we have two possibilities. Either $\delta=1$, and 
we set $w_q$ to be the suffix of $w$ starting with the first occurrence of $(s,q)$.
Or $\delta<1$, implying that $S\cap A=\emptyset$ and thus
$(s,q)$ is not in a BSCC. Thus on almost all runs $(s,q)$ is visited
only finitely many times, and we may define $w_q$ to be the suffix
starting with the last occurrence of $(s,q)$ in $w$.
For every event $E$
we define the set $E'\coloneqq\{w\in\run_\G \mid w_q \in E\}$.
Denoting simply by $A$ the event of visiting $A$,
it is easy to verify for all $E$ that
$\PrA{\sigma}{(s_0,q_0)}{E \mid A} = \PrA{\sigma_q}{(s_0,q_0)}{E' \mid A}$.
Since $\DI$ is tail, we have $\DI'\subseteq\DI$. Thus almost
all runs in $\G(\sigma_q)$ satisfy $\DI$.

If the first case does not apply then there must be a BSCC, $S$, such that
$|S\cap A|\geq2$. Using Lemma~\ref{lem:MC-classify} for $\G(\sigma)$ restricted
to $S$, with $R=A$, we obtain that there must be at least
one $(s,q)\in A$ such that the expected accumulated reward until revisiting $A$,
$\mu_q$, is positive.
Observe that $(S\smallsetminus A)\cup\{(s,q)\}$ forms a BSCC, $S_q$, in $\G(\sigma_q)$.
Using Lemma~\ref{lem:MC-classify} on $\G(\sigma_q)$ restricted to $S_q$, with
$R=\{(s,q)\}$, we obtain that all runs in $\G(\sigma_q)$ started
in $S_q$ satisfy $\DI$. Because, similarly to the previous case, almost all runs in
$\G(\sigma_q)$ remain either unaffected or visit $S_q$, we obtain again that almost
all runs in $\G(\sigma_q)$ satisfy $\DI$.
\qed\end{proof}

\subsection{Proof of Proposition~\ref{prop:long-run-MD-det}, objective $\negCN$}

Recall that
$W_\infty = \{ s \mid \Val^{\DI}(s)=1\}$, and
$W_+ = \{ s \mid \Val^\Nn(s)=1\}$.

\begin{lemma}\label{lem:CN-di-nn}
For every maximizing MDP, $\G$, and its state, $s$,
if $\Val^\negCN(s)=1$ then
there is a strategy, $\tau$, such that
$\PrA\tau{s}{\Reach{W_\infty\cup W_+}}=1$.
\end{lemma}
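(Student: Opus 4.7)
The plan is to argue by contradiction and derive a structural decomposition reminiscent of Lemma~\ref{lem:MC-classify}. Assume $\Val^{\negCN}(s)=1$ but that no strategy from $s$ reaches $W_\infty \cup W_+$ with probability $1$, and set $U := V \setminus (W_\infty \cup W_+)$. Take a pure optimal strategy $\sigma$ for $\negCN$, which exists by~\cite[Theorem~3.1]{GH-SODA10} since $\negCN$ is tail. The failure assumption forces $\PrA{\sigma}{s}{\text{run stays in }U\text{ forever}} \geq 1 - \Val^{\Reach{W_\infty \cup W_+}}(s) > 0$, and since $\PrA{\sigma}{s}{\negCN}=1$, one obtains $\PrA{\sigma}{s}{\negCN\cap\text{stay in }U}>0$.

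I would next pass to the sub-MDP $\G_U$ on $U \cup \{*\}$, where $*$ is a fresh absorbing sink of reward $0$ and every transition leaving $U$ is redirected to $*$. The objective $O := \negCN \cap \neg\Reach{\{*\}}$ is tail in $\G_U$ (as $*$ is absorbing), and the previous paragraph shows its value is positive at $s$. Applying~\cite[Theorem~3.1]{GH-SODA10} again yields a pure optimal strategy for $O$, and a BSCC-collapse construction analogous to Lemma~\ref{lem:DI-MD} reduces it to a pure memoryless $\sigma'$; an argument analogous to Lemma~\ref{lem:reach-mdp-qual} ensures we may pick a state $v_0 \in U$ at which $\sigma'$ is optimal with $\Val^O(v_0)=1$.

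Fixing $\sigma'$ at $v_0$ induces a finite-state Markov chain $\M$ whose runs almost surely remain in $U$, hence enter some BSCC $B \subseteq U$ along which $\negCN$ holds almost surely. Fix $t\in B$ and, as in Lemma~\ref{lem:MC-classify} with $R=\{t\}$, let $\mu_t$ be the mean reward of one excursion from $t$ back to $t$. If $\mu_t>0$, that lemma yields $\PrA{}{t}{\DI}=1$, so $t\in W_\infty$, contradicting $t\in U$. If $\mu_t<0$, or if $\mu_t=0$ with cycle reward not almost surely zero, classical random-walk theory~\cite[Chapter~8]{Chung01} forces $\liminf=-\infty$ at visits to $t$, contradicting $\negCN$. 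In the remaining case $\mu_t=0$ with cycle reward identically zero, every realisable cycle through $t$, and hence by strong connectivity of $B$ every realisable cycle in $B$, has reward sum $0$. One thus obtains a well-defined potential $\phi:B\to\mathbb{Z}$, with $\phi(u)$ equal to the reward sum of any path from $t$ to $u$ in $\M$, and the accumulated reward along a run from $v\in B$ under $\sigma'$ at step $n$ equals $\phi(w(n))-\phi(v)$. Choosing $v^*\in B$ minimising $\phi$ makes this quantity nonnegative for every $n$, so $\sigma'$ witnesses $\Nn$ almost surely from $v^*$; hence $v^*\in W_+$, contradicting $v^*\in U$. Each case produces a contradiction, proving the lemma.

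The main obstacle is the memoryless reduction in the second paragraph. Because $O$ is tail but not submixing (inheriting the non-submixing character of $\negCN$), one cannot cite~\cite{Gimbert-STACS07} to obtain memoryless optimal strategies directly. Instead, the reduction must imitate the ad hoc BSCC-collapse construction of Lemma~\ref{lem:DI-MD}, progressively eliminating ambiguous memory states from the pure (possibly infinite-memory) optimal strategy delivered by~\cite[Theorem~3.1]{GH-SODA10}, and care must be taken to adapt that construction to the particular structure of $\negCN\cap\neg\Reach{\{*\}}$.
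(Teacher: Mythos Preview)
Your case analysis on the BSCC in the third paragraph is sound, and the potential-function argument is a nice way to extract a state of $W_+$ in the degenerate case. The problem is the step you yourself flag: obtaining a pure \emph{memoryless} strategy $\sigma'$ achieving $O=\negCN\cap\neg\Reach{\{*\}}$ with probability~$1$ from some state of~$U$. The reduction in Lemma~\ref{lem:DI-MD} does not adapt to this situation. That lemma takes as input a pure \emph{finite-memory} strategy, forms the finite product Markov chain $\G(\sigma)$, and manipulates its BSCCs; in the $\DI$ case the paper earns this input by first building a finite-memory strategy through a bespoke stopping-time construction (the argument with the bounds $m$, $4m$, $n$ in the proof sketch of Proposition~\ref{prop:long-run-MD-det}). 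You have no analogous intermediate step for~$O$: all that \cite[Theorem~3.1]{GH-SODA10} supplies is a pure strategy with possibly infinite memory, for which there is no finite product chain and hence no BSCCs to collapse. Without memorylessness (or at least finite memory) you cannot form the finite Markov chain~$\M$, cannot isolate a BSCC $B\subseteq U$, and cannot define the i.i.d.\ excursion rewards underlying your trichotomy on~$\mu_t$---under a history-dependent strategy the rewards gained between successive visits to a fixed state are simply not identically distributed.

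The paper's proof takes a genuinely different route that sidesteps this issue. It never asks for a memoryless (or finite-memory) strategy for $\negCN$ or for anything like~$O$; it works directly with an arbitrary pure optimal strategy $\tau$ for $\negCN$ and argues probabilistically. Runs that never reach $W_\infty$ almost surely have finite $\liminf$ (this uses Proposition~\ref{prop:long-run-MD-det} and Lemma~\ref{lem:reach-mdp-qual}, but only for $\DI$, which is already established). On any such run one can read off a sequence of successive minima and, from it, a tail event $\Inf{v}$ meaning ``state $v$ is visited infinitely often at the run's limiting minimum level''. The events $\Inf{v}$ cover almost all runs outside $\Reach{W_\infty}$, and whenever $\Inf{v}$ has positive probability a further appeal to \cite[Theorem~3.1]{GH-SODA10} yields a strategy from $v$ witnessing $\Nn$, so $v\in W_+$. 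In effect, the $\Inf{v}$ construction plays the role of your potential-function case, but it is extracted from the behaviour of runs under an \emph{arbitrary} pure strategy rather than from the BSCC structure of a finite Markov chain, and that is precisely what lets the argument avoid the circularity you run into.
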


\begin{proof}
If $\Val^{\Reach{W_\infty}}(s)=1$ then we are already done for this state.
Assume $\Val^{\Reach{W_\infty}}(s)<1$.

\begin{claim}
There is a strategy, $\tau$, such that
\begin{enumerate}
\item\label{enum:winDI} $\PrA\tau{s}{\negCN}=1$;
\item\label{enum:memless} $\tau$ restricted to $W_\infty$ is memoryless and
$\PrA\tau{v}{\DI}=1$ for all $v\in W_\infty$;
\item\label{enum:between} almost all $w\in\run_{\G(\tau)}(s)$ which
do not visit $W_\infty$ satisfy

\begin{equation}\label{eq:liminf-bound}
\infty > \liminf_{n\to \infty} \sum_{i=0}^{n}r(w(i)) > -\infty \ .
\end{equation}
\end{enumerate}
\end{claim}

\begin{proof}
Choose a strategy satisfying~\ref{enum:winDI}, it must exist by~\cite[Theorem~3.1]{GH-SODA10}.
By Proposition~\ref{prop:long-run-MD-det} for $\DI$ we obtain~\ref{enum:memless},
because $\DI\subseteq\negCN$.
On the other hand, runs
avoiding $W_\infty$ belong to $\DI$ with probability $0$, as a consequence of
Lemma~\ref{lem:reach-mdp-qual} for the objective $\DI$,
\footnote{A careful reader may suspect a circular dependency since Lemma~\ref{lem:reach-mdp-qual}
uses Proposition~\ref{prop:long-run-MD-det}. This is, however, a correct use, since
it only uses the proposition for $\DI$, which has already been proved.}
proving~\ref{enum:between}.
\qed
\end{proof}

We now define an event $\Inf{v}$ for all states $v$.
Consider a run $w$, satisfying (\ref{eq:liminf-bound}).
There must be
some integer $\ell$ such that $\sum_{i=0}^{n}r(w(i))\geq\ell$ for all $n\geq0$.
Choosing the greatest such $\ell$, there is some index, $j$, such that
$\sum_{i=0}^{j}r(w(i))=\ell$.
We call the smallest such $j$ to be the \emph{minimum} of $w$,
and $\ell$ is said to be the \emph{minimal value} of $w$.
According to this, we define inductively the following functions:
$M_1(w)$ is the minimum of $w$, and,
given $n\coloneqq M_k(w)$, and the suffix
$w'=w(n{+}1)\, w(n{+}2) \cdots$ of $w$, we set
$M_{k+1}(w)\coloneqq M_1(w')+n+1$ for $k\geq1$.
Further, $m_k\coloneqq \sum_{i=0}^{M_k(w)}r(w(i))$.
(See also Figure~\ref{fig:minima} for an example.)
The sequence $\{m_k\}_{k=1}^\infty$ is non-decreasing and,
due to the first inequality in (\ref{eq:liminf-bound}) also
bounded, hence it has a well defined finite limit, $\bar{m}$.
Given some state $v$, we define an event, $\Inf{v}$, by the condition that
there are infinitely many
$k$ such that the state visited at time $M_k$ is $v$ and $m_k=\bar{m}$.

\begin{figure}
\begin{center}
\begin{tikzpicture}[x=0.3cm,y=0.3cm]
\draw[->] (0,-3.5) -- (0,7);
\draw[->] (0,0) -- (13,0);
\draw[dotted] (0,-2) -- (13,-2);
\draw[dotted] (0,4) -- (13,4);
\draw[dotted] (2,-2) -- (2,0);
\draw[dotted] (11,4) -- (11,0);
\draw (1.0,7.8) node {$\sum_{i=0}^{k}r(w(i))$};
\draw (13.5,0) node {$k$};
\draw (2,1) node {$M_1$};
\draw (11,-1) node {$M_2$};
\draw (-1.7,0) node {$(0,0)$};
\draw (-1.7,4) node {$m_2$};
\draw (-1.7,-2) node {$m_1$};
\foreach \x/\y in {0/-1,1/-1,2/-2,3/-1,4/0,5/1,6/2,7/3,8/4,9/5,10/5,11/4,12/5,13/6}
{
\fill (\x,\y) circle (0.2);
}
\end{tikzpicture}
\end{center}
\caption{An example of a run and its minima.}\label{fig:minima}
\end{figure}
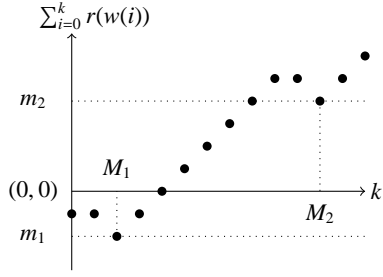

\begin{claim}
For all states $v$,
if $\PrA\tau{v}{\Inf{v}}>0$
then $v\in W_+$.
\end{claim}

\begin{proof}
Fix a state $v$ satisfying the assumption of the claim.
Note that due to our choice of $\tau$,
for all such $v$, $W_\infty$ is not reached on a run from $\Inf{v}$.
Observe that $\Inf{v}$ is tail, so by \cite[Theorem~3.1]{GH-SODA10} there is
a state $v'$ and a strategy $\pi$ such that $\PrA\pi{v'}{\Inf{v}}=1$.
In particular, this must be true for $v'=v$ since $\Inf{v}\subseteq\Reach{v}$,
and the objective
is tail. Finally, the strategy can be chosen so that almost surely
$m_k=0$ for all $k\geq1$.
In other words, 
$\PrA\pi{v}{\Nn}=1$. In particular, $v\in W_+$.
\qed
\end{proof}

Since there are only finitely many states, the union $\bigcup_{v}\Inf{v}$
has probability $1$ on the condition of not reaching $W_\infty$.
The last claim showed that
$\PrA\tau{s}{\Reach{W_+} \mid \Inf{v}}=1$
for all $v\in V$ with $\PrA\tau{s}{\Inf{v}}>0$.
This proves
$\PrA\tau{s}{\Reach{W_\infty\cup W_+}}=1$.
\qed\end{proof}

\subsection{Proof of Lemma~\ref{lem:01bscc}}

\begin{reftheorem}{Lemma}{\ref{lem:01bscc}}
\lemmabscctext{}
\end{reftheorem}
\smallskip

\begin{proof}
From every state, $s$, every other state, $t$, is visited almost
surely. $O$ is tail, thus $\PrA{}{s}{O}=\PrA{}{t}{O}$.
Assume that $\PrA{}{s}{O}>0$ for some $s$, and thus for all $s$. 
Since a Markov chain is a special case of a SSG, we directly
apply~\cite[Theorem~3.2]{GH-SODA10} and get that $\PrA{}{s}{O}=1$.
\qed
\end{proof}

\subsection{Proof of Lemma~\ref{lem:PM-DI-eq}}

\begin{lemma}\label{lem:PM-DI-scc}
Let $\M$ be an irreducible Markov chain with rewards on states,
and $s$ a fixed state of $\M$.
If $\PrA{}{s}{\DI}=1$ then $\PrA{}{s}{\PM}=1$.
\end{lemma}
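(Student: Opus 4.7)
The plan is to combine Lemma~\ref{lem:MC-classify} (which is already available as a tool for analyzing finite irreducible Markov chains with rewards) with the ergodic theorem for finite irreducible Markov chains. Since $\M$ is finite and irreducible, it admits a unique stationary distribution $\pi$, and $\pi(v)>0$ for every state $v$. By the ergodic theorem (e.g.\ \cite[Theorem~1.10.2]{Norris98}), for almost every run $w$,
\[
\lim_{n\to\infty} \frac{1}{n}\sum_{i=0}^{n-1} r(w(i)) \;=\; \sum_{v} \pi(v)\, r(v) \ .
\]
Thus it suffices to show that this common limit is strictly positive, because then the $\liminf$ in the definition of $\PM$ is almost surely positive.

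To bound the limit from below, I would fix an arbitrary state $t$ of $\M$, take the singleton subset $R=\{t\}$ in the setup preceding Lemma~\ref{lem:MC-classify}, and consider $\mu_t$, the expected reward accumulated between consecutive visits to $t$. Since $\DI$ is a tail objective and $\M$ is irreducible, Lemma~\ref{lem:01bscc} yields $\PrA{}{t}{\DI}=\PrA{}{s}{\DI}=1$. If one had $\mu_t \leq 0$, the second bullet of Lemma~\ref{lem:MC-classify} would give $\PrA{}{t}{\DI}=0$, a contradiction. Hence $\mu_t > 0$ for every $t$.

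Finally, I would identify the ergodic average with the per-cycle average. By the renewal reward theorem applied to the successive returns of the chain to $t$ (equivalently, by a standard Ces\`aro computation using $V_k^t/k \to 1/\pi(t)$ a.s.\ and the SLLN for the i.i.d.\ increments $G_k^t$), the long-run average reward equals $\mu_t\cdot\pi(t)$. Both factors are strictly positive, so the limit above is strictly positive almost surely. Hence $\PrA{}{s}{\PM}=1$.

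The main potential obstacle is the identification $\sum_v \pi(v) r(v) = \mu_t \cdot \pi(t)$ and the passage from the sampled subsequence $S_{V_k^t}/V_k^t$ to the full sequence $S_n/n$; but both are classical consequences of the ergodic theorem for finite irreducible Markov chains combined with the boundedness of rewards, so they do not present a real difficulty. All the probabilistic machinery needed has already been set up in the proof of Lemma~\ref{lem:MC-classify}, so the argument fits naturally into the surrounding development.
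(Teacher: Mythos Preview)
Your argument is correct. Both you and the paper start by applying Lemma~\ref{lem:MC-classify} with $R=\{s\}$ (or $\{t\}$) to deduce that the mean reward per excursion, $\mu$, is strictly positive. From there the routes diverge: the paper constructs the ascending ladder structure of the random walk $S_\ell=\sum_{k\le\ell}X_k$, shows $\Ex{}{s}{M'}<\infty$, applies the SLLN to the i.i.d.\ increments $Y_k=V_{M_k+1}-V_{M_{k-1}+1}$, and bounds the mean payoff from below by $1/\Ex{}{s}{M'}$. You instead appeal directly to the ergodic theorem for finite irreducible chains to conclude that $\tfrac1n\sum_{i<n}r(w(i))\to\sum_v\pi(v)r(v)$ a.s., and then identify this constant with $\mu_t\cdot\pi(t)>0$ via the renewal-reward relation. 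Your route is shorter and avoids the ladder-variable machinery; the paper's route yields an explicit lower bound on the mean payoff but still implicitly relies on the existence of the a.s.\ limit when it asserts that the subsequential quantity $S_{V_{M_n+1}}/V_{M_n+1}$ ``is equal to the mean payoff''. Either way the conclusion follows, and your concern about passing from the sampled subsequence to the full sequence is moot once you invoke the ergodic theorem first.
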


\begin{proof}
We fix $s$ as a starting state.
Denote by $X_k$, $k\geq1$ the reward accumulated between
the $k$-th (inclusive) and $k{+}1$-st (exclusive) visit to $s$.
Since $\M$ is a Markov chain, these variables are i.i.d.; we denote by
$\mu$ their common mean.
Choosing $R=\{s\}$ in Lemma~\ref{lem:MC-classify} yields that
$\mu>0$.
Thus the sums, $S_\ell\coloneqq\sum_{k=1}^\ell X_k$ define a homogeneous
random walk with a positive drift.
Define:
\begin{itemize}
\item $V_k$, $k\geq1$ to be the time of the $k$-th visit to $s$ (note: $V_1\equiv0$),
\item $M$ to be the least $k$ such that $S_k>0$, and
\item $M'\coloneqq V_M$.
\end{itemize}
By Claim~\ref{claim:Mprime-fin} from the proof of Lemma~\ref{lem:MC-classify}
we know that $\Ex{}{s}{M'}<\infty$.
Further we define:
\begin{itemize}
\item $M_0\equiv 0$, and $M_k$, $k>0$ to be the least
$m$ such that $S_m>S_{M_{k-1}}$, and
\item $Y_k \coloneqq V_{M_{k}+1} - V_{M_{k-1}+1}$, $k>0$.
(Note: $\sum_{k=1}^n Y_n=V_{M_n+1}$.)
\end{itemize}
The variables in the sequence $Y_k$ are independent
and distributed identically with $M'$, thus we may
apply the strong law of large numbers
(see, e.g., Theorem~1.10.1 in~\cite{Norris98})
and obtain that almost surely
\[
\lim_{n\to\infty} \frac{\sum_{k=1}^n Y_n}{n}
=\lim_{n\to\infty} \frac{V_{M_n+1}}{n}
= \Ex{}{s}{M'} < \infty \ .
\]
Because $S_{M_n+1}\geq n$,
we have almost surely
\[
\lim_{n\to\infty} \frac{S_{M_n+1}}{V_{M_n+1}}
\geq
\lim_{n\to\infty} \frac{n}{V_{M_n+1}}
= \frac{1}{\Ex{}{s}{M'}} > 0 \ .
\]
Because the leftmost term is equal to the mean payoff, we conclude that
$\PrA{}{s}{\PM}=1$.
\qed\end{proof}

\subsection{Proof of Proposition~\ref{prop:NT-strat}}
\label{sec:NT-strat}

Recall the SSG $\G'$ with the reachability objective $R$
from the proof of Theorem~\ref{thm:NT-val}.
This game emulates playing $\G$ until (1) the accumulated
reward exceeds $|V|-j$ or a state $u$ with $\Val^\CN(u)=1$ is visited --
then, by Lemma~\ref{lem:NT-MD}
the players may switch to optimizing the probability of $\CN$ instead --
or until (2) the accumulated reward is $-j$.
Memoryless strategies for $\G'$ induce strategies for $\G$ which use
memory of size $|V|$ to store the accumulated reward until it exceeds $|V|-j$
or hits $-j$.
From this and from the analysis in the proof of Theorem~\ref{thm:NT-val}
we can see that the strategies $\sigma$ and $\pi$ from the statement of the proposition,
are easy to construct, with the promised time complexity,
to be pure and using only a finite memory of size $|V|$.

The last thing to show is how to transform $\sigma$
to some memoryless $\sigma'$, preserving the optimality for $\Term{j}$ in $s$.
Restricted to states $u$ with $\Val^\CN(u)=1$, $\sigma$ is already
memoryless.
Call these $u$ \emph{safe}. We set $\sigma'(u)=\sigma(u)$ for every safe $u$.
We further call \emph{unsafe} those states $u$
which are not safe, but there is some strategy $\tau$
such that $\PrA{\sigma,\tau}{s}{\Reach{u}}>0$.
Unsafe states may have been visited with various accumulated
rewards so far, but from what we already proved it follows that
all these accumulated rewards lie between $-j$ and $|V|-j$ (excl.).
For an unsafe $u$, denote by $i_u$ the maximal such accumulated reward,
and by $w_u$ some history along which this was accumulated.
It remains to define $\sigma'$ for unsafe $u$.
We simply set $\sigma'(u)=\sigma(w_u)$.
Since, under $\sigma'$, no unsafe state is reached from
a safe state, $\sigma'$ is still optimal for $\Term{j}$
in all safe states.
Consider an unsafe $u$, and some $i$, $-j<i\leq i_u$, and an arbitrary
strategy $\pi'$ for Min.
Then in $\G$, under the strategies $(\sigma',\pi')$, on condition that $u$ was visited with
an accumulated reward $i$, almost all runs from $u$ either visit a
safe state,
or the accumulated reward reaches $-j$ at some point,
or an unsafe state $t$ is visited, and at the same time
the accumulated reward is at most $i_t+i-i_u$.
Thus by double induction, first on
$|V|-j-i_u$ then on $i$, for all unsafe $u$ and $i\leq i_u$ we have
that $\sigma$ is optimal for $\Term{i}$ in $u$.
Thus $\sigma$ is pure, memoryless and optimal for $\Term{j}$ in $s$.
\qed

\paragraph{An example where memory for Min is needed.}
This example shows that the strategy $\pi$ of player Min from Proposition~\ref{prop:NT-strat}
may indeed have to use memory.
Consider this minimizing MDP:
\begin{itemize}
\item States: $v,low,up,back,down$; Min owns $V_\bot=\{v\}$.
\item Transitions (and their rewards):
$v\tran{}back$ (reward $0$), $v\tran{}low$ ($-1$),
$low \tran{}up$ ($+1$), $up \tran{}up$ ($+1$),
$back \tran{} down$ ($0$), $back \tran{} v$ ($+1$),
$down\tran{}down$ ($-1$).
\item From probabilistic states the successor is chosen uniformly among available transitions.
\end{itemize}
Then for all $j>0$: $\Val^{\Term{j}}(v)<1$, as witnessed by the strategy
choosing $back$ as a successor of $v$ whenever the reward accumulated so far
is $1$, and $low$ in all other cases.
However, there are only two pure and memoryless strategies for Min:
\begin{itemize}
\item choosing the transition $v\tran{}back$ makes the probability
of $\Term{j}$ to be $1$ for all $i > 0$;
\item choosing $v\tran{}low$ makes the probability of $\Term{1}$ to be $1$.
\end{itemize}

\end{document}